\newtheorem{thm}{Theorem}[section]
\newtheorem{lem}[thm]{Lemma}
\newtheorem{assumption}[thm]{Assumption}
\newtheorem{definition}[thm]{Definition}
\newtheorem{example}[thm]{Example}
\newtheorem{remark}[thm]{Remark}
\newenvironment{rem}{\begin{remark}\rm}{\end{remark}}
\newtheorem{algorithm}[thm]{Algorithm}
\newcommand{\cob}{\color{black}}
\title{A Deterministic Algorithm for the Capacity of Finite-State Channels~\footnote{A preliminary version~\cite{WuHanMarcus} of this work has been presented in IEEE ISIT 2019.}~\thanks{This research is partly supported by a grant from the Research Grants Council of the Hong Kong Special Administrative Region, China (Project No. 17301017) and a grant by the National Natural Science Foundation of China (Project No. 61871343).
VA acknowledges support from NSF grants CNS--1527846, CCF--1618145, CCF-1901004, the NSF Science \& Technology Center grant CCF--0939370 (Science of Information), and the William and Flora Hewlett Foundation
supported Center for Long Term Cybersecurity at Berkeley.
}}
\author{Chengyu Wu\textsuperscript{1}, Guangyue Han\textsuperscript{2}, Venkat Anantharam\textsuperscript{3} and Brian Marcus\textsuperscript{4}\\[2ex]
\textsuperscript{1}The University of Hong Kong, \textit{chengyuw@connect.hku.hk}\\
\textsuperscript{2}The University of Hong Kong, \textit{ghan@hku.hk}\\
\textsuperscript{3}University of California, Berkeley, \textit{ananth@berkeley.edu}\\
\textsuperscript{4}The University of British Columbia, \textit{marcus@math.ubc.ca}}
\date{{\normalsize \today}}
\begin{document}\maketitle\thispagestyle{empty}

\begin{abstract}
We propose two modified versions of the classical gradient ascent method to compute the capacity of finite-state channels with Markovian inputs. For the case that the channel mutual information is strongly concave
in a parameter taking values in a compact convex subset of
some Euclidean space,
\cob
our first algorithm proves to achieve polynomial accuracy
in polynomial time and, moreover,
\cob
for some special families of finite-state
channels
\cob
our algorithm can achieve exponential accuracy in polynomial time
under some technical conditions.
\cob
For the case that the channel mutual information may not be strongly concave, our second algorithm proves to be at least locally convergent.
\end{abstract}

\section{Introduction}

As opposed to a discrete memoryless channel, which features a single state and thereby can be characterized by input and output random variables only, the characterization of a finite-state channel has to resort to additional state random variables.
Encompassing discrete memoryless channels as special cases, finite-state channels have long been used in a wide range of communication scenarios where the current behavior of the channel may be affected by its past. Among many others, conventional examples of such channels include inter-symbol interference channels~\cite{fo72}, partial response channels~\cite{pr00,th87} and Gilbert-Elliott channels~\cite{mu89}.

While it is well-known that the Blahut-Arimoto algorithm~\cite{ar72, bl72} can be used to efficiently compute the capacity of a discrete memoryless channel, the computation of the capacity of a general finite-state channel has long been a notoriously difficult problem, which has been open for decades. The difficulty of this problem may be justified by the widely held (yet not proven) belief that the capacity of a finite-state channel may not be achieved by any finite-order Markovian input, and an increase of the memory of the input may lead to an increase of the channel capacity.

We are mainly concerned with finite-state channels with Markov processes of a fixed order as their inputs. Possibly an unavoidable compromise we have to make in exchange for progress in computing the capacity, the extra fixed-order assumption imposed on the input processes is also necessary for the situation
where the channel input
\cob
has to satisfy certain constraints, notably finite-type constraints~\cite{lm95} that are commonly used in magnetic and optical recording. On the other hand, the focus on Markovian inputs can also be justified by the known fact that the Shannon capacity of an indecomposable finite-state channel~\cite{ga68} can be approximated by the Markov capacity with increasing orders (see Theorem $2.1$ of \cite{LiHanSiegel}). Recently, there has been some
progress in computing
\cob
the capacity of finite-state channels with such input constraints. Below we only list the most relevant work in the literature, and we refer the reader to~\cite{han15} for a comprehensive
list of references.
\cob
In~\cite{ka01}, the Blahut-Arimoto algorithm was reformulated into a stochastic expectation-maximization procedure and a similar algorithm for computing the lower bound of the capacity of finite-state channels was proposed, which led to a generalized Blahut-Arimoto algorithm \cite{pa04} that proves to compute the capacity under some concavity assumptions. More recently, inspired by ideas in stochastic approximation, a randomized algorithm was proposed~\cite{han15} to compute the capacity under weaker concavity assumptions, which can be verified to hold true for families of practical channels~\cite{hm09b,LiHan}. Both of the above-mentioned algorithms, however,
are of a randomized nature
\cob
(a feasible implementation of the generalized Blahut-Arimoto algorithm will necessitate a randomization procedure).
By comparison, among many other advantages, our algorithms, which are deterministic in nature, can be used to derive accurate estimates on the channel capacity, as evidenced by the tight bounds in Section~\ref{applications}.

In this paper, we first deal with the case that the mutual information of the finite-state channel is strongly concave
in a parameter taking values in a compact convex subset of some Euclidean space,
\cob
for which we  propose our first algorithm that proves to converge to the channel capacity
exponentially fast.
\cob
This algorithm largely follows the spirit of the classical gradient ascent method. However, unlike the classical case, the lack of an explicit expression
for our target function
\cob
and the boundedness of the variable domain (without an explicit description of the boundary) pose additional challenges. To overcome the first issue, a convergent sequence of approximating functions (to the original target function)
is used
\cob
instead in our treatment; meanwhile, an additional check condition is also added to ensure
that the iterates stay inside
\cob
the given variable domain. A careful convergence analysis has been carried out to deal
with the difficulties
\cob
caused by such modifications. This algorithm is efficient in the sense that, for a general finite-state channel (satisfying the above-mentioned concavity condition
and some additional technical conditions),
\cob
it achieves polynomial accuracy in polynomial time (see Theorem~\ref{general-channel}), and for some special families of finite-state
channels
\cob
it achieves exponential accuracy in polynomial time (see Section~\ref{applications}).

It is well known that the mutual information of a finite-state channel may not be concave
under the natural parametrization in several examples;
\cob
see, e.g.,~\cite{hm09b, LiHan}. Another modification of the classical gradient ascent method is proposed to handle this challenging scenario. Similar to our first algorithm, our second one replaces the original target function with a sequence of approximating functions, which unfortunately renders conventional methods such as the Frank-Wolfe method (see, e.g., \cite{be99}) or methods using the \L ojasiewicz inequality (see, e.g.,~\cite{ab06}) inapplicable. To address this issue, among other subtle modifications, we impose an extra check in the algorithm to slow down the pace ``a bit'' to avoid an immature convergence to a non-stationary point but ``not too much'' to ensure the local convergence.

As variants of the classical gradient ascent method, our algorithms can be applied to any sequence of convergent functions,
so they can be
\cob
of particular interest in information theory since many information-theoretic quantities are defined as the limit of their finite-block versions. On the other hand though, we would like to add that our algorithms are actually stated in much more general settings and may have potential applications in optimization scenarios where the target functions are difficult to compute but amenable to approximations.

The remainder of this paper is organized as follows. In Section \ref{channel-model}, we describe our channel model in great detail. Then, we present our first algorithm (Algorithm~\ref{algo1}) in Section~\ref{algorithm} and analyze its convergence behavior in Section~\ref{convergence} under some strong concavity assumptions. Applications of this algorithm for computing the capacity of finite-state channels under concavity assumptions will be discussed in Section~\ref{applications}. In particular, in this section, we show that the estimation of the channel capacity can be improved by increasing the Markov order of the input process
in some examples.
\cob
In Section~\ref{without concavity}, our second algorithm (Algorithm~\ref{algo3}) is presented, which proves to be at least locally convergent. Finally, in Section \ref{gil-ell}, our second algorithm
is applied
\cob
to a Gilbert-Elliott channel where the concavity of the channel mutual information rate
in the natural parametrization
\cob
is not known, and yet
fast convergence behavior
\cob
is observed.

In the remainder of this paper, the base of the logarithm is assumed to be $e$.
\cob

\section{Channel Model and Problem Formulation} \label{channel-model}

In this section, we introduce the channel model considered in this paper, which is essentially the same as that in~\cite{han15, pa04}.

As mentioned before, we are concerned with a discrete-time finite-state channel with a Markovian channel input. Let $X=\{X_n: n = 1, 2, \dots\}$ denote the channel input process, which is often assumed to be a first-order stationary Markov chain~\footnote{The assumption that $X$ is a first-order Markov chain is
for notational convenience
\cob
only: through a usual ``reblocking'' technique, the higher-order Markov case can be boiled down to the first-order case.} over a finite alphabet $\mathcal{X}$, and let $Y=\{Y_n: n =1, 2, \dots\}$ and $S=\{S_n: n =0, 1, \dots\}$ denote the channel output and state processes over finite alphabets $\mathcal{Y}$ and $\mathcal{S}$, respectively.

Let $\Pi$ be the set of all the stochastic matrices of dimension $|\mathcal{X}| \times |\mathcal{X}|$. For any finite set $F \subseteq \mathcal{X}^2$ and any $\delta > 0$, define
$$
\Pi_{F, \delta}\triangleq\{A \in \Pi: A_{i j}=0,~~\mbox{for}~(i, j) \in F ~~\mbox{and}~ A_{ij}\geq \delta ~~\mbox{otherwise}\}.
$$
It can be easily verified that if one of the matrices from $\Pi_{F, \delta}$ is primitive, then all matrices from $\Pi_{F, \delta}$ will be primitive, in which case,
as elaborated on in~\cite{han15},
\cob
$F$ gives rise to a so-called {mixing} finite-type constraint. Such a constraint has been widely used in data storage and magnetic recoding~\cite{mrs98}, the best known example being the so-called $(d,k)$-run length limited (RLL) constraint over the alphabet $\{0, 1\}$, which forbids any sequence with fewer than $d$ or more than $k$ consecutive zeros in between two successive $1$'s.

The following conditions will be imposed on the finite-state channel described above:
\begin{enumerate}
\item [(\ref{channel-model}.$a$)] There exist $F \subseteq \mathcal{X}^2$ and $\delta > 0$ such that the transition probability matrix of $X$ belongs to $\Pi_{F, \delta}$, each element of which is a primitive matrix.
\item[(\ref{channel-model}.$b$)] $(X, S)$ is a first-order stationary Markov chain whose transition probabilities satisfy
$$
p(x_n, s_n|x_{n-1}, s_{n-1})=p(x_n|x_{n-1}) p(s_n|x_n, s_{n-1}), \quad n = 1, 2, \dots,
$$
where $p(s_n|x_n, s_{n-1}) > 0$ for any $s_{n-1}, s_{n}, x_n$.
\item[(\ref{channel-model}.$c$)] The channel is stationary and characterized by
$$
p(y_n|y_1^{n-1}, x_1^n, s_1^{n-1})=p(y_n|x_n, s_{n-1}) > 0, \quad n = 1, 2, \dots,
$$
that is, conditioned on the pair
$(x_n, s_{n-1})$,
the output $Y_n$ is statistically independent of all
inputs, outputs and states
prior to $X_n, Y_n$ and $S_{n-1}$, respectively.
\end{enumerate}
As elaborated on in
Remark 4.1 of \cite{han15}, a finite-state channel specified as above is indecomposable. Therefore, assuming that the input $X$ (or, more precisely, the transition probability matrix of $X$) is
analytically
parameterized
by a finite-dimensional parameter $\theta$ in a compact convex subset $\Theta$ of some Euclidean space (such a parameterization exists thanks to the stationarity of $X$), we can express the capacity of the above channel as
\begin{equation} \label{C-I}
C=\max_{\theta\in \Theta} I(X(\theta); Y(\theta))= \max_{\theta\in \Theta} \lim_{k \to \infty} I_k(X(\theta); Y(\theta)),
\end{equation}
where
\begin{equation} \label{I-n}
I_k(X(\theta); Y(\theta)) \triangleq \frac{H(X_1^k(\theta))+H(Y_1^k(\theta))-H(X_1^k(\theta), Y_1^k(\theta))}{k}.
\end{equation}
Moreover, it has also been shown in \cite{han15} that $I_k(X(\theta); Y(\theta))$ (\textit{resp.}, its derivatives) converges to $I(X(\theta); Y(\theta))$ 
(\textit{resp.}, the corresponding derivatives)
\cob
exponentially fast in $k$
\cob
under Assumptions (\thesection.$a$), (\thesection.$b$) and (\thesection.$c$). Hence, although the value of the target function $I(X(\theta);Y(\theta))$ cannot be exactly computed, it can be approximated
by the function $I_k(X(\theta); Y(\theta))$, which has an explicit expression, within an error exponentially decreasing in $k$.
\cob

Instead of merely solving (\ref{C-I}), we will deal with the following slightly more general problem
\begin{align} \label{general_opti}
&\max f(\theta)=\lim_{k\rightarrow \infty} f_k(\theta)  \notag \\
&\mbox{subject to} \quad \theta\in \Theta,
\end{align}
\cob
under the following assumptions:
\begin{enumerate}
\item $\Theta$ is a compact convex subset of $\mathbb{R}^d$
for some $d\in \mathbb{N}$ with nonempty interior $\Theta^\circ$ and boundary $\partial \Theta$;
\cob
\item
$f(\theta)$ and all $f_k(\theta)$, $k \geq 0$, are continuous on $\Theta$
\cob
and twice continuously differentiable in $\Theta^\circ$;
\item there exist $M_0>0$, $N > 0$ and $0<\rho<1$ such that for all $k\geq 1$, $\theta\in \Theta^\circ$ and $\ell=0,1,2$, it holds true that $||f_0^{(\ell)}(\theta)||_2\leq M_0$ and
\begin{equation} \label{C-rho}
||f_k^{(\ell)}(\theta)-f_{k-1}^{(\ell)}(\theta)||_2\leq N \rho^k, \quad \quad  ||f_k^{(\ell)} (\theta)-f^{(\ell)} (\theta)||_2\leq N \rho^k,
\end{equation}
where the superscript $^{(\ell)}$ denotes the $\ell$-th order derivative and $||\cdot ||_2$ denotes the Frobenius norm of a vector/matrix.
\end{enumerate}
Obviously, if we set $f_k(\theta)=I_k(X(\theta); Y(\theta))$ and assume that $X(\theta)$
analytically
parameterized
by some $\theta\in \Theta$, then (\ref{general_opti}) boils down to (\ref{C-I}).

When the target function $f(\theta)$ has an explicit expression and $\Theta$ is characterized by finitely many twice continuously differentiable constraints,
the optimization problem (3) can be effectively dealt with via, for example, the classical gradient ascent method \cite{boyd04} or the Frank-Wolfe method \cite{be99} or their numerous variants.
However, feasible implementations and executions of these algorithms usually hinge on explicit descriptions of $\Theta$ and $\nabla f$, both of which can be rather intricate in our setting.


Before moving to the next two sections to present our algorithms, we make some observations about the sequence $\{f_k(\theta)\}_{k=0}^\infty$. It immediately follows from the uniform boundedness of $||f_0^{(\ell)}(\theta)||_2$ and the inequality (\ref{C-rho}) that there exists $M>0$ such that for all $k\geq 0$, $\ell=0,1,2$ and $\theta\in \Theta^\circ$,
\begin{equation} \label{m-M}
||f_k^{(\ell)}(\theta)||_2\leq M.
\end{equation}
In particular, for any $\theta\in \Theta^\circ$, when $\ell=2$, $f_k^{(\ell)}(\theta)=\nabla^2 f_k(\theta)$ is a symmetric matrix
whose spectral norm is given by
$$
|||\nabla^2 f_k(\theta)|||_2\triangleq\sup_{\mathbf{x}\neq 0} \frac{||\nabla^2 f_k(\theta)\cdot \mathbf{x}||_2}{||\mathbf{x}||_2}=|\lambda_1(\theta)|,
$$
where $\lambda_1$ denotes the largest (in modulus) eigenvalue of $\nabla^2 f_k(\theta)$.
Hence, the inequality (\ref{m-M}) and the easily verifiable fact that $|||\nabla^2 f_k(\theta)|||_2\leq ||\nabla^2 f_k(\theta)||_2$ imply
\begin{equation} \label{succeq_M}
-M \mathbb{I}_d\preceq \nabla^2 f_k(\theta)\preceq M \mathbb{I}_d
\end{equation}
\cob
for any $k$ and any $\theta\in \Theta^\circ$, where $\mathbb{I}_d$ denotes the $d\times d$ identity matrix, and for two matrices $A, B$ of the same dimension, by $A\preceq B$, we mean that $B-A$ is a positive semidefinite matrix. The existence of the constant $M$ in (\ref{succeq_M}) will be crucial for implementing our algorithms.

\section{The First Algorithm: with Concavity} \label{algorithm}

Throughout this section, we assume that $f(\theta)$ is strongly concave,
i.e.,
\cob
there exists $m>0$
such that
\cob
for all $\theta\in \Theta^\circ,$
\begin{equation} \label{little-m}
\nabla^2 f(\theta) \preceq -m \mathbb{I}_d,
\end{equation}
and moreover
\cob
\begin{equation} \label{unique-maximum}
f \mbox{ achieves its unique maximum in } \Theta^\circ.
\end{equation}
We will present our first algorithm to solve the optimization problem  (\ref{general_opti}). As mentioned before, the algorithm is in fact a modified version of the classical gradient ascent algorithm, whereas its convergence analysis
is more intricate than
\cob
the classical one. To overcome the issue that the target function $f(\theta)$ may not have an explicit
expression
\cob
we capitalize on the fact that it can be well approximated by $\{f_k(\theta)\}_{k=0}^\infty$, which will be used instead to compute the estimates in each iteration.

Before presenting our algorithm, we need the following lemma, which, as evidenced later,
is important in
\cob
initializing and analyzing our first algorithm.
\begin{lem} \label{initial}
There exists a non-negative integer $k_0$ such that
\begin{enumerate}
\item[(a)] $\displaystyle\frac{(N+M)M \rho^{k_0+1}+2N \rho^{k_0+1}}{1-\rho}\leq \frac{\delta}{8}$ and $N \rho^{k_0}\leq \displaystyle\frac{\delta}{8}$, where $\delta \triangleq\max\limits_{\theta\in \Theta} f(\theta) - \max\limits_{\theta\in \partial \Theta} f(\theta) >0$.
\item[(b)] For any $k\geq k_0$, $f_{k}(\theta)$ is strongly concave and has a unique maximum in $\Theta^\circ$; and moreover, we have
\begin{align} \label{contr_perturbation_of_theta_k}
\sup_{k\geq k_0}||\theta_{k}^*-\theta^*||_2+\frac{d^{1/2}\rho^{k_0}}{1-\rho} < dist (\theta^*, \partial \Theta),
\end{align}
where $\theta^*$ denotes the unique maximum point of $f$ and $\theta^*_k$ denotes the unique maximum point of $f_k$.
\item[(c)] There exists $y_0\in \mathbb{R}$ such that
for all $k \ge k_0$,
\cob
$$\emptyset \subsetneq B_{k}\subseteq C_{k}\subseteq \Theta^\circ \quad \mbox{and} \quad dist (C_{k},\partial \Theta)>0,$$
\cob
where 
$$B_{k}\triangleq \{x\in \Theta: f_{k}(x)\geq y_0\} \quad \mbox{and} \quad C_{k}\triangleq \left\{x\in \Theta: f_{k}(x)\geq y_0-\frac{\delta}{8} \right\}.$$
\cob
\end{enumerate}
\end{lem}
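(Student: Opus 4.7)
Since $\theta^*\in \Theta^\circ$ by (\ref{unique-maximum}), continuity and compactness give $\delta>0$, so (a) follows immediately because $\rho\in(0,1)$ makes both left-hand sides $O(\rho^{k_0})$. My strategy is to show that (b) and (c) similarly hold ``once $k_0$ is large enough'', and then take $k_0$ to be the maximum of all such thresholds; this is not circular because the estimates used in (b) and (c) depend only on (a), on the strong concavity of $f$, and on the approximation bounds (\ref{C-rho}).

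For (b), I would use the second-derivative estimate in (\ref{C-rho}) to write, for $\theta\in\Theta^\circ$,
$$\nabla^2 f_k(\theta)\preceq \nabla^2 f(\theta)+N\rho^k\mathbb{I}_d\preceq -(m-N\rho^k)\mathbb{I}_d,$$
so once $N\rho^k\le m/2$ the function $f_k$ is strongly concave and hence admits a unique maximiser $\theta_k^*$ on $\Theta$. To localise $\theta_k^*$ near $\theta^*$, I combine strong concavity of $f$ at $\theta^*$ with the uniform bound $|f_k-f|\le N\rho^k$:
$$\tfrac{m}{2}\|\theta^*-\theta_k^*\|_2^2\le f(\theta^*)-f(\theta_k^*)\le [f(\theta^*)-f_k(\theta^*)]+[f_k(\theta_k^*)-f(\theta_k^*)]\le 2N\rho^k,$$
where the middle step uses $f_k(\theta^*)\le f_k(\theta_k^*)$. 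Hence $\|\theta^*-\theta_k^*\|_2\le 2(N\rho^k/m)^{1/2}\to 0$, so $\theta_k^*\in\Theta^\circ$ for all $k$ large. Since $d^{1/2}\rho^{k_0}/(1-\rho)\to 0$ while $\mathrm{dist}(\theta^*,\partial\Theta)>0$ is fixed, enlarging $k_0$ enforces (\ref{contr_perturbation_of_theta_k}).

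For (c), I would choose $y_0:=f(\theta^*)-\delta/2$, placed in the middle of the gap between $\max_{\partial\Theta}f$ and $f(\theta^*)$. Using $N\rho^{k_0}\le\delta/8$ from (a), for every $k\ge k_0$ the maximiser satisfies $f_k(\theta_k^*)\ge f_k(\theta^*)\ge f(\theta^*)-\delta/8\ge y_0$, so $\theta_k^*\in B_k$ and $B_k\neq\emptyset$; inclusion $B_k\subseteq C_k$ is immediate. Conversely, for any $\theta\in\partial\Theta$,
$$f_k(\theta)\le f(\theta)+N\rho^k\le (f(\theta^*)-\delta)+\delta/8=f(\theta^*)-7\delta/8<y_0-\delta/8,$$
so $\partial\Theta\cap C_k=\emptyset$, i.e.\ $C_k\subseteq\Theta^\circ$. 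Since $C_k$ is a closed subset of the compact $\Theta$ and is disjoint from the closed $\partial\Theta$, $\mathrm{dist}(C_k,\partial\Theta)>0$.

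The main obstacle is really bookkeeping: a single $y_0$ and a single $k_0$ must simultaneously (i) make $B_k$ nonempty via $f_k(\theta_k^*)\ge y_0$, (ii) push $C_k$ off $\partial\Theta$ by an amount at least $\delta/8$, (iii) deliver the closeness inequality (\ref{contr_perturbation_of_theta_k}), and (iv) yield strong concavity of $f_k$. Each requirement takes the form ``right-hand side tends to $0$'', so taking $k_0$ to be the maximum threshold and placing $y_0$ exactly at $f(\theta^*)-\delta/2$ splits the $\delta$-gap evenly and meets all four at once.
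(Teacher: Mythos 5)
Your proposal is correct and follows essentially the same route as the paper: (a) by letting $k_0$ grow, (b) by perturbing the Hessian via (\ref{C-rho}) to get strong concavity of $f_k$ and then locating $\theta_k^*$ near $\theta^*$ using strong concavity of $f$ together with the uniform bound $|f_k-f|\leq N\rho^k$, and (c) by placing $y_0$ inside the gap $\delta$ so that $B_k\neq\emptyset$ while $C_k$ misses $\partial\Theta$. The only (harmless) difference is that your explicit estimate $\|\theta_k^*-\theta^*\|_2\leq 2(N\rho^k/m)^{1/2}$ gives interiority of $\theta_k^*$ and the bound (\ref{contr_perturbation_of_theta_k}) quantitatively, where the paper argues qualitatively via the boundary-gap inequality and convergence of the maximal values.
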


\begin{proof}
Since $(a)$ trivially holds for sufficiently large $k_0$, we will omit its proof and proceed to prove $(b)$. Towards this end, note that according to (\ref{C-rho}) and (\ref{little-m}), it holds true that for sufficiently large $k$, each $f_k$ is strongly concave. Noting that $f(\theta^*)-\max_{\theta\in \partial \Theta} f(\theta)= \delta$, we deduce from $(a)$ and (\ref{C-rho}) that for $k$ large enough,
\begin{align} \label{max-boundary_of_k}
\max_{\theta\in \Theta}f_k(\theta)-\max_{\theta\in \partial \Theta} f_k(\theta)\geq f_k(\theta^*)-\max_{\theta\in \partial\Theta} f(\theta) -\frac{\delta}{8}\geq f(\theta^*)-\max_{\theta\in \partial \Theta} f(\theta) -\frac{\delta}{4}=\frac{3\delta}{4}>0.
\end{align}
\cob
Hence, for $k$ sufficiently large, $f_k$ achieves its unique maximum at $\theta_k^* \in \Theta^\circ$.

We now prove that 
$\theta_k^* \rightarrow \theta^*$ as $k \rightarrow \infty$. To see this, observe that (\ref{C-rho}) implies the uniform convergence of $f_k$ to $f$, i.e., for any $\varepsilon>0$, there exists $K>0$ such that for any $k>K$
and any $\theta\in \Theta$,
\cob
$f(\theta)-\varepsilon \leq f_k(\theta)\leq f(\theta) +\varepsilon.$
In particular, for $k> K$, we have
$$
f(\theta^*)-\varepsilon \leq f_k(\theta^*) \leq f_k(\theta_k^*)\leq f(\theta_k^*)+\varepsilon\leq f(\theta^*)+\varepsilon,
$$
which further
implies that $f_k(\theta_k^*) \rightarrow f(\theta^*)$
as $k \to \infty$.
\cob
It then follows from the triangle inequality that
\begin{equation} \label{f(theta_k)_tof(theta)}
f(\theta_k^*)\rightarrow f(\theta^*),~~\mbox{ as $k \to \infty$.}
\end{equation}
\cob
Now, by the Taylor series expansion, there exists some $\tilde{\theta}\in \Theta^\circ$ such that
\begin{equation} \label{taylor}
f(\theta_k^*)-f(\theta^*)=\nabla f(\theta^*)^T(\theta_k^*-\theta^*)+(\theta_k^*-\theta^*)^T\nabla^2 f(\tilde{\theta}) (\theta_k^*-\theta^*).
\end{equation}
Since $\nabla f(\theta^*)=0$ and $\nabla^2 f(\tilde{\theta}) \preceq -m \mathbb{I}_d$ according to (\ref{little-m}), it follows from (\ref{f(theta_k)_tof(theta)}) and (\ref{taylor}) that
$\theta_k^*\rightarrow \theta^*$ as $k \rightarrow \infty$,
\cob
as desired.

It then immediately follows that $||\theta_{k}^*-\theta^*||_2+d^{1/2}\rho^{k}/(1-\rho)\rightarrow 0$ as $k\rightarrow \infty$.
Observing that
\cob
$dist(\theta^*, \partial\Theta)>0$ (since $\theta^*\in \Theta^\circ$), we infer that (\ref{contr_perturbation_of_theta_k}) holds for sufficiently large $k$. Hence, $(b)$ will be satisfied as long as $k_0$ is sufficiently large.

We now show that $(c)$ also holds for sufficiently large $k_0$.
From the definition of $\delta$, there exists $y_0$ such that $\max_{\theta\in \partial \Theta} f(\theta)+\frac{\delta}{4}<y_0<\max_{\theta\in \Theta} f(\theta)-\frac{\delta}{4}$.
From (\ref{C-rho}), using the same logic as that used to derive
(\ref{max-boundary_of_k}), we infer that for sufficiently large $k$,
\begin{equation}        \label{interior}
\max_{\theta\in \partial\Theta} f_k(\theta)<y_0-\frac{\delta}{8}<y_0<\max_{\theta\in \Theta} f_k(\theta).
\end{equation}
\cob
According to $(b)$ and the fact that $\theta_k^*\in \Theta^\circ$,
which follows from (\ref{interior}),
\cob
we deduce that $\emptyset \subsetneq B_k\subseteq C_k\subseteq \Theta^o$ and $dist(C_k, \partial \Theta)>0$
with
$$
C_k\triangleq \left\{x: f_k(x)\geq y_0-\frac{\delta}{8}\right\} \quad \mbox{and} \quad B_k\triangleq \{x: f_k(x)\geq y_0\}.
$$
Therefore, $(c)$ is valid as long as $k_0$ is sufficiently large. Finally, choosing a larger $k_0$ if necessary, we conclude that there exists $k_0$ such that $(a)$, $(b)$ and $(c)$ are all satisfied.
\end{proof}
\begin{rem}
We remark that, for any $k \geq k_0$,
\cob
each $B_k$ specified as above has a non-empty interior,
which is due to the strict inequality (\ref{interior}) and the continuity of $f_k$.
\cob
\end{rem}

We are now ready to present our first algorithm, which modifies the classical gradient ascent method in the following manner: Instead of using
$\nabla f$
\cob
to find a feasible direction, we use $\nabla f_k$ as the ascending direction in the $k$-th iteration
and then pose
\cob
additional check conditions for a careful choice of the step size. Note that such modifications make the convergence analysis
more difficult
\cob
compared to the classical case,
as elaborated on in
\cob
the next subsection.

\begin{algorithm} \label{algo1}  (The first modified gradient ascent algorithm)

\textbf{Step $0$.} Choose $k_0$ such that Lemma \ref{initial} $(a)$-$(c)$ hold. Set $k=0$, $g_0=f_{k_0}$ and choose $\alpha \in (0, 0.5)$, $\beta \in (0, 1)$ and $\theta_{0} \in \Theta^\circ$ such that $\theta_{0}\in B_{k_0}$ and $\nabla g_{0}(\theta_{0}) \neq 0$.

\textbf{Step $1$.} Increase $k$ by $1$, and set $t=1$, $g_k=f_{k_0+k}$.

\textbf{Step $2$.} If $\nabla g_{k-1} (\theta_{k-1})=0$, set
$$
\tau=\theta_{k-1}+t\nabla g_{k-1} (\theta_{k-1}+\rho^{k+k_0} \mathbf{1}),
$$
where $\mathbf{1}$ denotes the all-one vector in $\mathbb{R}^d$;
\cob
otherwise, set
$$
\tau = \theta_{k-1} + t \nabla g_{k-1}(\theta_{k-1}).
$$
If $\tau \not \in \Theta$ or
$$
g_k(\tau) < g_k(\theta_{k-1}) + \alpha t ||\nabla g_{k-1}(\theta_{k-1})||_2^2 - (N+M) M t \rho^{k+k_0},
$$
set $t=\beta t$ and go to Step $2$, otherwise set $\theta_k=\tau$ and go to Step $1$.
\end{algorithm}

\begin{rem}
It is obvious from the definition of $g_k$ that as $k$ tends to infinity, $g_k$ (\textit{resp.}, its first and second order derivatives) converges to $f$ (\textit{resp.}, its first and second order derivatives) exponentially with the same constant as in (\ref{C-rho}).
\end{rem}

\begin{rem}
The existence of $\theta_{0}$ in Step $0$ can be justified by Lemma \ref{initial} $(c)$.
\end{rem}

\begin{rem} \label{perturbation_not_in_domain}
We point out that in Step 2 of Algorithm \ref{algo1}, for any $k\geq 1$,
the point $\theta_{k-1}+\rho^{k+k_0} \mathbf{1}$
\cob
will always lie in $\Theta^\circ$.
To see this,
note that, according to Algorithm \ref{algo1},
\cob
$\tau=\theta_{k-1}+t \nabla g_{k-1}(\theta_{k-1}+\rho^{k+k_0} \mathbf{1})$
\cob
only if $\theta_{k-1}$ is the maximum point of $g_{k-1} = f_{k+k_0-1}$, i.e., $\theta_{k-1}=\theta_{k+k_0-1}^*$. However, by Lemma \ref{initial} $(b)$, these points always satisfy (\ref{contr_perturbation_of_theta_k}), which immediately implies that
$\theta_{k-1}+\rho^{k+k_0} \mathbf{1} \in \Theta^\circ$
\cob
for all $k\geq 1$.

\end{rem}

\begin{rem}
For technical reasons that will be made clear in the next section, $\alpha$ is chosen within $(0,0.5)$ to ensure the convergence of the algorithm. In Step $2$ of algorithm \ref{algo1}, the case that $\nabla g_{k-1} (\theta_{k-1})=0$ is singled out for special treatment to prevent the algorithm from getting trapped at the maximum point of $f_{k-1}$ for a fixed $k$, which may be still far away from the maximum point of $f$.
\end{rem}

\subsection{Convergence Analysis} \label{convergence}


As mentioned earlier, compared to the classical gradient ascent method, Algorithm \ref{algo1} poses additional challenges for convergence analysis. The main difficulties come from the two check conditions in Step 2: the ``perturbed'' Armijo condition (see, e.g., Chapter 2 of \cite{be99} for more details)
$$
g_k(\tau) \geq g_k(\theta_{k-1}) + \alpha t ||\nabla g_{k-1}(\theta_{k-1})||_2^2 - (N+M) M t \rho^{k+k_0}
$$
may break the monotonicity of the sequence $\{g(\theta_k)\}_{k=0}^\infty$ which would have been used to simplify the convergence analysis in the classical case; and the extra check condition $\tau\in \Theta$ ($\tau$ depends on $k$) forces us
to seek uniform control
\cob
(over all $k$) of the time used to ensure the validity of this condition in each iteration.
In the remainder of this section,
\cob
we deal with these problems and examine the convergence behavior of Algorithm~\ref{algo1}. In a nutshell, we will prove that our algorithm converges exponentially in time under some strong concavity assumptions.

Note that the variable $k$ as in Algorithm~\ref{algo1} actually records the number of times that Step $1$ has been executed at the present moment. To facilitate the analysis of our algorithm, we will put it into an equivalent form, where an additional variable $n$ is used to record the number of times that Step $2$ has been executed.

Below is Algorithm~\ref{algo1} rewritten with the additional variable $n$.

\begin{algorithm} \label{algo2} (An equivalent form of Algorithm~\ref{algo1})

\textbf{Step $0$.} Choose $k_0$ such that Lemma \ref{initial} $(a)$-$(c)$ hold. Set $n=0, k=0, \hat{g}_0=g_{0}=f_{k_0}$, and choose $\alpha\in (0,0.5), \beta\in (0,1)$ and
$\hat{\theta}_0\in \Theta^\circ$
such that $\hat{\theta}_0\in B_{k_0}$ and $\nabla \hat{g}_0(\hat{\theta}_0)\neq 0$.
\cob

\textbf{Step $1$.} Increase $k$ by $1$, and set $t=1$, $g_k=f_{k_0+k}$.

\textbf{Step $2$.} Increase $n$ by $1$.  If $\nabla \hat{g}_{n-1} (\hat{\theta}_{n-1})=0$, set
\begin{equation} \label{case 1 of y}
\tau=\hat{\theta}_{n-1}+t\nabla \hat{g}_{n-1} (\hat{\theta}_{n-1}+\rho^{k+k_0}\mathbf{1});
\end{equation}
otherwise, set
\begin{equation} \label{case 2 of y}
\tau = \hat{\theta}_{n-1} + t \nabla \hat{g}_{n-1}(\hat{\theta}_{n-1}).
\end{equation}
If $\tau \not \in \Theta^\circ$ or
\begin{equation}
g_k(\tau) < g_k(\hat{\theta}_{n-1}) + \alpha t ||\nabla \hat{g}_{n-1}(\hat{\theta}_{n-1})||_2^2 - (N+M) M t \rho^{k+k_0}, \label{increasing condition2}
\end{equation}
then set $\hat{\theta}_n=\hat{\theta}_{n-1}, \hat{g}_n=\hat{g}_{n-1}, t=\beta t$ and go to Step $2$; otherwise, set $\hat{\theta}_n=\tau, \hat{g}_n=g_{k}$ and go to Step $1$.
\end{algorithm}

\begin{rem} \label{hat-no-hat}
Let $n_{0}=0$, and for any $k \geq 1$, recursively define
$$
n_k = \inf\{n>n_{k-1}: \hat{\theta}_n\neq \hat{\theta}_{n-1}\}.
$$
Then, one verifies that for any $k \geq 0$, it holds true that $\hat{\theta}_{n_k}=\theta_k$, $\hat{g}_{n_k}=g_k=f_{k+k_0}$ and moreover, $\hat{\theta}_l = \hat{\theta}_{l+1}$, $\hat{f}_l = \hat{f}_{l+1}$ for any $l$ with $n_{k-1} \leq l \leq n_k-1$, which justify the equivalence between Algorithm~\ref{algo1} and Algorithm~\ref{algo2}.
\end{rem}

The following theorem establishes the exponential convergence of Algorithm~\ref{algo2} with respect to $n$.
\begin{thm} \label{exponential-convergence-1}
Suppose, as in (\ref{little-m}) and (\ref{unique-maximum}), that the strongly concave function $f$ achieves its unique maximum in $\Theta^\circ$. Then there exist $\hat{M}>0$ and $0<\hat{\xi}<1$ such that for all $n\geq 0$,
\begin{equation} \label{hat-conv-rate}
|\hat{g}_n(\hat{\theta}_n)-f(\theta^*)|\leq \hat{M} \hat{\xi}^n,
\end{equation}
where $\hat{g}_n(\hat{\theta}_n)$ is obtained by executing Algorithm \ref{algo2}.
\end{thm}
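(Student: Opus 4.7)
The plan has four stages executed in order: (i) confine $\{\hat\theta_n\}$ to a compact subset of $\Theta^\circ$ bounded away from $\partial\Theta$; (ii) extract a uniform positive lower bound $t_{\min}$ on the accepted step sizes; (iii) derive a linear recursion $e_k\le \eta\, e_{k-1}+O(\rho^{k+k_0})$ with $\eta\in(0,1)$ for the gap $e_k:=f(\theta^*)-g_k(\hat\theta_{n_k})$; and (iv) pass from exponential decay in $k$ to exponential decay in $n$ via a uniform bound on the number of backtracking steps per outer iteration. For (i) I would telescope the perturbed Armijo inequality (\ref{increasing condition2}) together with the transition bound $|g_k(\theta)-g_{k-1}(\theta)|\le N\rho^{k+k_0}$ from (\ref{C-rho}); discarding the nonnegative Armijo gain and using $t_j\le 1$ gives
$$
g_k(\hat\theta_{n_k})\ \ge\ g_0(\hat\theta_0)\ -\ \sum_{j=1}^k \bigl[(N+M)M+N\bigr]\rho^{j+k_0}.
$$
Since $\hat\theta_0\in B_{k_0}$ and Lemma \ref{initial}(a) makes the sum at most $\delta/8$, we obtain $g_k(\hat\theta_{n_k})\ge y_0-\delta/8$, whence $\hat\theta_{n_k}\in C_k\subseteq\Theta^\circ$. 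Lemma \ref{initial}(c) together with uniform convergence of the level sets $C_k$ to $\{f\ge y_0-\delta/8\}$ then yields a \emph{uniform} distance $\epsilon>0$ from $\partial\Theta$.

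For (ii), a Taylor expansion exploiting $\nabla^2 g_k\succeq -M\mathbb{I}_d$, the gradient-difference bound $\|\nabla g_k-\nabla g_{k-1}\|_2\le N\rho^{k+k_0}$, and $\|\nabla g_{k-1}\|_2\le M$ gives
$$
g_k(\tau)-g_k(\hat\theta_{n-1})\ \ge\ t\|\nabla g_{k-1}(\hat\theta_{n-1})\|_2^2\ -\ MNt\rho^{k+k_0}\ -\ \tfrac{Mt^2}{2}\|\nabla g_{k-1}(\hat\theta_{n-1})\|_2^2,
$$
and a direct comparison shows this beats the Armijo right-hand side whenever $t\le 2(1-\alpha)/M$. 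Combined with $\|\tau-\hat\theta_{n-1}\|_2\le tM$ and the uniform distance $\epsilon$ from (i), backtracking terminates with $t_k\ge t_{\min}:=\beta\,\min\{2(1-\alpha)/M,\ \epsilon/M\}>0$; the degenerate case $\nabla g_{k-1}(\hat\theta_{n-1})=0$ is treated analogously, the mean value theorem yielding $\|\nabla g_{k-1}(\hat\theta_{n-1}+\rho^{k+k_0}\mathbf{1})\|_2=O(\sqrt d\,M\rho^{k+k_0})$, which makes both checks trivial for $k_0$ large. For (iii), (\ref{C-rho}) together with (\ref{little-m}) implies $g_{k-1}$ is strongly concave with constant $m/2$ for $k_0$ large, so the Polyak--\L{}ojasiewicz inequality yields
$$
\|\nabla g_{k-1}(\hat\theta_{n_{k-1}})\|_2^2\ \ge\ m\bigl(g_{k-1}(\theta_{k-1}^*)-g_{k-1}(\hat\theta_{n_{k-1}})\bigr)\ \ge\ m\bigl(e_{k-1}-N\rho^{k-1+k_0}\bigr),
$$
using $g_{k-1}(\theta_{k-1}^*)\ge g_{k-1}(\theta^*)\ge f(\theta^*)-N\rho^{k-1+k_0}$. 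Inserting this into the perturbed Armijo inequality and again invoking $|g_k-g_{k-1}|\le N\rho^{k+k_0}$ produces $e_k\le (1-\alpha t_{\min}m)\,e_{k-1}+C\rho^{k+k_0}$ for an explicit $C$, which iterates to $|e_k|\le \hat M'\xi'^{\,k}$ with $\xi':=\max\{1-\alpha t_{\min}m,\ \rho\}\in(0,1)$; here $\alpha<1/2$ from Step 0 and $m\le M$ ensure $1-\alpha t_{\min}m\in(0,1)$.

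Finally, for (iv), each outer iteration terminates within $L:=\lceil\log_\beta t_{\min}\rceil+1$ backtracking attempts, so $n_k\le Lk$ and hence $k(n)\ge n/L-1$ for the unique $k(n)$ with $n_{k(n)}\le n<n_{k(n)+1}$. Since $\hat g_n(\hat\theta_n)=g_{k(n)}(\hat\theta_{n_{k(n)}})$,
$$
|\hat g_n(\hat\theta_n)-f(\theta^*)|\ =\ |e_{k(n)}|\ \le\ \hat M'\xi'^{\,k(n)}\ \le\ \hat M\,\hat\xi^{\,n}
$$
with $\hat\xi:=\xi'^{\,1/L}\in(0,1)$ and $\hat M:=\hat M'/\xi'$, which is the desired bound. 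I expect stage (i) to be the chief obstacle: the perturbation term $-(N+M)Mt\rho^{k+k_0}$ genuinely permits $g_k(\hat\theta_{n_k})$ to decrease, so one must argue that the cumulative decrease is summable and small enough that iterates never escape the interior where strong concavity is usable. This is precisely what the calibration of $k_0$ and $y_0$ in Lemma \ref{initial}(a),(c) is engineered to guarantee, and keeping all the error terms balanced against $\rho^{k+k_0}$ factors throughout stages (ii)-(iv) is the main bookkeeping burden.
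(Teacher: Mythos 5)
Your proposal is correct and follows essentially the same route as the paper's proof: telescoping the perturbed Armijo condition together with Lemma~\ref{initial}(a),(c) to keep all iterates in a region at positive distance from $\partial\Theta$, a uniform bound on the backtracking count per outer iteration, a strong-concavity (Polyak--\L{}ojasiewicz) contraction recursion with an exponentially decaying perturbation term, and finally conversion of decay in $k$ into decay in $n$ via that uniform bound. The only cosmetic differences are that the paper transfers the level-set containment back to the single fixed set $C_{k_0}$ using (\ref{C-rho}) (thereby avoiding your appeal to uniform convergence of the varying sets $C_k$, whose distance to $\partial\Theta$ Lemma~\ref{initial}(c) only controls for each fixed $k$) and applies the gradient-domination inequality directly to $f$ rather than to $g_{k-1}$; both details are interchangeable with yours.
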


\begin{proof}

For simplicity, we only deal with the case {\small$\nabla \hat{g}_{n-1} (\hat{\theta}_{n-1}) \neq 0$}  in Step 2 of Algorithm~\ref{algo2} (and therefore (\ref{case 2 of y}) is actually executed), since the opposite case follows from a similar argument by replacing $\hat{\theta}_{n-1}$ with $\hat{\theta}_{n-1}+\rho^{k+k_0} \mathbf{1}$.

Let $T_1(k)$ denote the smallest non-negative integer $p$ such that
\begin{equation} \label{T_1_in_the_first_algo}
\hat{\theta}_{n_{k-1}}+\beta^{p} \nabla \hat{g}_{n_{k-1}} (\hat{\theta}_{n_{k-1}}) \in \Theta^\circ,
\end{equation}
$T(k)$ denote the smallest non-negative integer $q$ such that $q \geq T_1(k)$ and
\begin{equation}
 g_k(\hat{\theta}_{n_{k-1}}+\beta^{q} \nabla \hat{g}_{n_{k-1}} (\hat{\theta}_{n_{k-1}})) \geq g_k(\hat{\theta}_{n_{k-1}})+ \alpha \beta^q ||\nabla \hat{g}_{n_{k-1}}(\hat{\theta}_{n_{k-1}})||_2^2-(N+M)M \beta^q \rho^{k+k_0}. \notag
\end{equation}
Note that the well-definedness of $T_1(k)$ and $T_k$ follows from the observation that if (\ref{T_1_in_the_first_algo}) holds for some non-negative integer $p$, then it also holds for any integer $p' > p$.
Adopting these definitions, we can immediately verify that 
$$
T(k)=n_k-n_{k-1},
$$
which corresponds to the number of times Step $2$ (of Algorithm \ref{algo1}) has been executed to obtain $\hat{\theta}_{n_{k}}$ from $\hat{\theta}_{n_{k-1}}$.

The remainder of the proof consists of the following three steps.

\textbf{Step 1: Uniform boundedness of $T(k)$}. In this step, we show that there exists $A \geq 0$
such that, for all $k$,
\cob
$T(k)\leq A$.

Since $\Theta^\circ$ is open and $\hat{\theta}_{0}\in \Theta^\circ$, we have $T_1(k)<\infty$ for any $k\geq 0$. Note that we haven't show that $T_1(k)$ is uniformly bounded as this stage.

For any $q\geq T_1(k)$, letting
$$
\tau=\hat{\theta}_{n_{k-1}}+\beta^q \nabla \hat{g}_{n_{k-1}} (\hat{\theta}_{n_{k-1}}),
$$
we deduce that $\tau\in \Theta^\circ$ and both $f_k(\tau)$ and $f(\tau)$ are well-defined.
Recalling from (\ref{succeq_M}) that
$$\nabla^2 g_{k}(\theta)=\nabla^2 f_{k+k_0}(\theta) \succeq -M \mathbb{I}_d$$
for any $k\geq 0$ and any $\theta\in \Theta^\circ$,
we derive from the Taylor series expansion that
{\small\begin{align} \label{g_k(tau)}
g_k(\tau) &= g_k(\hat{\theta}_{n_{k-1}})+\beta^{q} \nabla g_{k}(\hat{\theta}_{n_{k-1}})^T \nabla \hat{g}_{n_{k-1}}(\hat{\theta}_{n_{k-1}})+\frac{\beta^{2q}}{2}  \nabla \hat{g}_{n_{k-1}}(\hat{\theta}_{n_{k-1}})^T \nabla^2 g_k(\tilde{\theta}_k)  \nabla \hat{g}_{n_{k-1}}(\hat{\theta}_{n_{k-1}}) \notag\\
& \geq g_k(\hat{\theta}_{n_{k-1}})+\beta^{q} \nabla g_{k}(\hat{\theta}_{n_{k-1}})^T \nabla \hat{g}_{n_{k-1}}(\hat{\theta}_{n_{k-1}})-\frac{M \beta^{2 q} }{2} ||\nabla \hat{g}_{n_{k-1}}(\hat{\theta}_{n_{k-1}})||_2^2,
\end{align}
}
\!\!where $\tilde{\theta}_k\in \Theta^\circ$. According to (\ref{C-rho}), we have
{\small\begin{align*}
&\nabla g_{k}(\hat{\theta}_{n_{k-1}})^T \nabla \hat{g}_{n_{k-1}}(\hat{\theta}_{n_{k-1}})\\
&=\nabla \hat{g}_{n_{k-1}}(\hat{\theta}_{n_{k-1}})^T \nabla \hat{g}_{n_{k-1}}(\hat{\theta}_{n_{k-1}})+(\nabla g_{k}(\hat{\theta}_{n_{k-1}})^T \nabla \hat{g}_{n_{k-1}}(\hat{\theta}_{n_{k-1}})-\nabla \hat{g}_{n_{k-1}}(\hat{\theta}_{n_{k-1}})^T \nabla \hat{g}_{n_{k-1}}(\hat{\theta}_{n_{k-1}})) \\
&\geq ||\nabla \hat{g}_{n_{k-1}}(\hat{\theta}_{n_{k-1}})||_2^2-N\rho^{k+k_0}||\nabla \hat{g}_{n_{k-1}} (\hat{\theta}_{n_{k-1}})||_2.
\end{align*}
}
\!\!This, together with (\ref{g_k(tau)}), implies
{\small\begin{align*}
 g_k(\tau)\geq & g_k(\hat{\theta}_{n_{k-1}})+\beta^{q} ||\nabla \hat{g}_{n_{k-1}}(\hat{\theta}_{n_{k-1}})||_2^2 -\frac{M \beta^{2 q}}{2} ||\nabla \hat{g}_{n_{k-1}}(\hat{\theta}_{n_{k-1}})||_2^2 - N \beta^q \rho^{k+k_0} ||\nabla \hat{g}_{n_{k-1}}(\hat{\theta}_{n_{k-1}})||_2\\
                  \geq& g_k(\hat{\theta}_{n_{k-1}})+\beta^{q} ||\nabla \hat{g}_{n_{k-1}}(\hat{\theta}_{n_{k-1}})||_2^2-\frac{M \beta^{2 q}}{2} ||\nabla \hat{g}_{n_{k-1}}(\hat{\theta}_{n_{k-1}})||_2^2- N M \beta^q \rho^{k+k_0},
\end{align*}
}
\!\!where the last inequality follows from (\ref{m-M}).
Note that for any non-negative integer {\small$q \geq -\log M/\log \beta$}, we have
$$
\beta^{q}-\frac{M \beta^{2q}}{2} \geq \frac{1}{2} \beta^{q}> \alpha \beta^{q},
$$
which immediately implies that (\ref{increasing condition2}) fails; in other words, for any non-negative integer $q\geq T_1(k)$, we have
\begin{equation}
 g_k(\hat{\theta}_{n_{k-1}}+\beta^{q} \nabla \hat{g}_{n_{k-1}} (\hat{\theta}_{n_{k-1}})) \geq g_k(\hat{\theta}_{n_{k-1}})+ \alpha \beta^q ||\nabla \hat{g}_{n_{k-1}}(\hat{\theta}_{n_{k-1}})||_2^2-(N+M)M \beta^q \rho^{k+k_0} \notag
\end{equation}
as long as $q\geq -\log M/\log \beta$.
It then follows that for any integer $k\geq 1$, $T(k)$ can be bounded as
\begin{align} \label{T(k)_two_cases}
T(k)\leq
\begin{cases}
A_2 \qquad &\mbox{if $T_1(k)\leq A_2$} \\
T_1(k)  &\mbox{if $T_1(k) >A_2$},
\end{cases}
\end{align}
where $A_2\triangleq \max\{0, -\log M/ \log \beta+1\}$ is a constant independent of $k$.
Now, to prove the uniform boundedness of $T(k)$, what remains is to show that there exists $A_1 \geq 0$ such that for all $k$,
$T_1(k)\leq A_1$.

When $T_1(k)\leq A_2$, we can simply set $A_1=A_2$ and deduce $T_1(k)\leq A_1$. When $T_1(k)> A_2$, recalling that $T_1(k)< \infty$ for any $k$, we can always find $\{\hat{\theta}_{n_k}\}_{k=0}^\infty$ such that for all $k\geq 0$, $\hat{\theta}_{n_k}\in \Theta^\circ$
and
\begin{equation} \label{incc}
g_k(\hat{\theta}_{n_{k}}) \geq g_k(\hat{\theta}_{n_{k-1}})+ \alpha \beta^{T_1(k)} ||\nabla \hat{g}_{n_{k-1}}(\hat{\theta}_{n_{k-1}})||_2^2-(N+M)M \beta^q \rho^{k+k_0}.
\end{equation}
Note that (\ref{C-rho}) and (\ref{incc}) together imply that
\cob
\begin{align}
g_{k}(\hat{\theta}_{n_k}) &\geq g_{k-1} (\hat{\theta}_{n_{k-1}})-(N+M)M \rho^{k+k_0}-N\rho^{k+k_0}, \notag
\end{align}
from which, by induction on $k$, we further obtain that
\begin{align*}
g_{k}(\hat{\theta}_{n_k}) &\geq g_{0}(\hat{\theta}_{0})-\sum_{i={0}}^{k-1} \left[(N+M) M \rho^{i+k_0+1}+N\rho^{i+k_0+1} \right] \\
& \geq g_{0}(\hat{\theta}_{0}) - \left[\frac{(N+M)M \rho^{k_0+1}}{1-\rho}+\frac{N \rho^{k_0+1}}{1-\rho} \right].
\end{align*}
It then follows that for all $k\geq 0$,
\begin{align} \label{f_k_0}
g_{0}(\hat{\theta}_{n_k})&\geq g_{0} (\hat{\theta}_{k_0}) -\left[\frac{(N+M)M\rho^{k_0+1}}{1-\rho}+\frac{N\rho^{k_0+1}}{1-\rho} \right] -\sum_{i=1}^{k} N \rho^{i+k_0} \notag \\
&\geq g_{0} (\hat{\theta}_{k_0}) -\left[\frac{(N+M)M\rho^{k_0+1}}{1-\rho}+\frac{2 N\rho^{k_0+1}}{1-\rho} \right]  \notag \\
&\geq g_{0} (\hat{\theta}_{k_0}) - \frac{\delta}{8},
\end{align}
where the last inequality follows from Lemma \ref{initial} $(a)$.
Now, letting $y_0, B_{k_0}$ and $C_{k_0}$ be defined as in Lemma \ref{initial}, we infer from (\ref{f_k_0}) and Lemma \ref{initial} $(c)$ that $\{\hat{\theta}_{n_k}\}_{k=0}^\infty \subseteq C_{k_0} \subseteq \Theta^\circ.$ Hence, for any non-negative integer
{\small$p\geq \log (dist(C_{k_0}, \partial\Theta)/M)/\log \beta$},
\cob
we have {\small$\hat{\theta}_{n_{k-1}}+\beta^{p} \nabla \hat{g}_{n_{k-1}} (\hat{\theta}_{n_{k-1}}) \in \Theta^\circ$} and it then follows that $T_1(k)\leq A_1$, where $A_1$ is defined as 
\begin{align} \label{A_1_bound}
A_1\triangleq\max \left\{0, \frac{\log (dist(C_{k_0}, \partial\Theta)/M )}{\log \beta}+1 \right\}.
\end{align}
\cob
Finally, it follows from (\ref{T(k)_two_cases}) and (\ref{A_1_bound}) that
\begin{align} \label{T(k)}
T(k)\leq A\triangleq \max\{A_1, A_2\},
\end{align}
as desired.

\textbf{Step 2: Exponential convergence of $\{f(\hat{\theta}_{n_k})\}$}.
First of all, for any $k\geq k_0$, from the definition of $T(k)$,
we have
$$
g_k(\hat{\theta}_{n_{k}}) \geq g_k(\hat{\theta}_{n_{k-1}})+\alpha \beta^{T(k)} ||\nabla g_{k-1}(\hat{\theta}_{n_{k-1}})||_2^2- (N+M) M \beta^{T(k)} \rho^{k+k_0}.
$$
\cob
Using (\ref{C-rho}), (\ref{m-M}) and
\cob
the definition of $\{\hat{g}_{n_k}\}_{k=0}^\infty$,
we now write
\cob
$$
f(\hat{\theta}_{n_k}) \geq f(\hat{\theta}_{n_{k-1}})+\alpha \beta^{T(k)} ||\nabla f(\hat{\theta}_{n_{k-1}})||_2^2- [(N+M)M\beta^{T(k)}+2 N+2 N M\rho] \rho^{k+k_0}.
$$
\cob
According to (\ref{little-m}), we have
$$
f(\theta^*) \leq f(\hat{\theta}_{n_{k-1}}) + \nabla f(\hat{\theta}_{n_{k-1}})^T (\theta^*-\hat{\theta}_{n_{k-1}}) -\frac{m}{2} ||\theta^*-\hat{\theta}_{n_{k-1}}||_2^2,
$$
which,
coupled with some straightforward estimates,
\cob
yields
$$
2m (f(\theta^*)-f(\hat{\theta}_{n_{k-1}})) \leq ||\nabla f(\hat{\theta}_{n_{k-1}})||_2^2.
$$
It then follows that
{\small\begin{align} \label{f}
&f(\theta^*)-f(\hat{\theta}_{n_k}) \notag \\
&\leq f(\theta^*)-f(\hat{\theta}_{n_{k-1}}) - \alpha \beta^{T(k)} ||\nabla f(\hat{\theta}_{n_{k-1}})||_2^2 +[(N+M)M\beta^{T(k)}+2 N+2 N M\rho] \rho^{k+k_0} \notag \\
		       &\leq  (1-2 m \alpha \beta^{T(k)})  (f(\theta^*)-f(\hat{\theta}_{n_{k-1}})) +[(N+M)M+2 N+2 N M\rho] \rho^{k+k_0} \notag \\
&\overset{(d)}\leq \left(1-\min\left\{ 2 m \alpha \beta^{A_1}, 2 m \alpha \beta^{A_2} \right\}\right)(f(\theta^*)-f(\hat{\theta}_{n_{k-1}})) +\left(\frac{NM+M^2+2N}{\rho}+2 N M\right) \rho^{k+k_0+1} \notag\\
              & = \eta (f(\theta^*)-f(\hat{\theta}_{n_{k-1}})) + \gamma_k,
\end{align}}
\cob
\!\!where
{\small \begin{align*}
\eta =1-\min\left\{2m\alpha, \frac{dist(C_{k_0},\partial \Theta)}{M}2m\alpha\beta , \frac{2m\alpha \beta}{M}\right\}, \quad\gamma_k=\left(\frac{NM+M^2+2N}{\rho}+2 N M\right)\rho^{k+k_0+1}
\end{align*}}
\cob
\!\!and $(d)$ follows from (\ref{T(k)}). Recursively applying inequality (\ref{f}) and noting that $0<\eta<1$, we infer that there exist $0<\xi<1$ and $M'>0$ such that
\begin{equation} \label{n_k}
f(\theta^*)-f(\hat{\theta}_{n_k})\leq M' \xi^k.
\end{equation}

\textbf{Step 3: Exponential convergence of $\{\hat{g}_n(\hat{\theta}_n)\}$}. In this step, we establish (\ref{hat-conv-rate}) and thereby finish the proof.

First, note that for any  positive integer $n\geq 0$, there exists an integer $k'\geq 0$ such that
\begin{align}
n_{k'} \leq n \leq n_{k'+1}, \quad n\leq (k'+1) A, \quad \hat{\theta}_n=\hat{\theta}_{n_{k'}}, \quad \hat{g}_n(\hat{\theta}_n)=\hat{g}_{n_{k'}}(\theta_{n_{k'}}), \notag
\end{align}
where $A$ is defined in (\ref{T(k)}). These four inequalities, together with (\ref{C-rho}) and (\ref{n_k}), imply the existence of $\hat{M}>0$ and $0<\hat{\xi}<1$ such that for any $n\geq 0$,
\begin{align}
|\hat{g}_n(\hat{\theta}_n)-f(\theta^*)| &\leq |\hat{g}_{n_{k'}}(\hat{\theta}_{n_{k'}}) -f(\hat{\theta}_{n_{k'}})|+|f(\hat{\theta}_{n_{k'}})-f(\theta^*)| \notag \\
& \leq N\rho^{k'+k_0} + M' \xi^{k'} \notag \\
&\leq N \rho^{k_0} \rho^{\lfloor n/{A}\rfloor -1} + M' \xi^{\lfloor n/{A} \rfloor -1} \notag \\
&\leq \hat{M} \hat{\xi}^n, \notag
\end{align}
\cob
which completes the proof of the theorem.
\end{proof}

Theorem~\ref{exponential-convergence-1}, together with the uniform boundedness of $T(k)$ established in its proof, immediately implies that Algorithm~\ref{algo1} exponentially converges in $k$. More precisely, we have the following theorem.
\begin{thm} \label{exponential-convergence-2}
For a strongly concave function $f$ whose unique maximum is achieved in $\Theta^\circ$, there exist $\tilde{M}>0$ and $0<\tilde{\xi}<1$ depending on $m,M,N$ and $\rho$ such that for all $k$,
\begin{equation} \label{tilde-conv-rate}
|g_k(\theta_k)-f(\theta^*)|\leq \tilde{M} \tilde{\xi}^k,
\end{equation}
where $g_k(\theta_k)$ is defined as in Algorithm \ref{algo2}.
\end{thm}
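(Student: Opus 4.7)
The plan is to deduce Theorem \ref{exponential-convergence-2} directly from Theorem \ref{exponential-convergence-1} by exploiting the bridge between the two algorithms that was carefully set up in Remark \ref{hat-no-hat}. That remark asserts that the iterates produced by Algorithm \ref{algo1} coincide with the ``accepted'' subsequence of the iterates produced by Algorithm \ref{algo2}, namely $\theta_k=\hat{\theta}_{n_k}$ and $g_k=\hat{g}_{n_k}$, where $n_k$ records the total number of Step~2 executions completed after the $k$-th update of the outer index $k$.

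First, I would apply Theorem \ref{exponential-convergence-1} at the index $n=n_k$ to obtain
$$
|g_k(\theta_k)-f(\theta^*)|=|\hat{g}_{n_k}(\hat{\theta}_{n_k})-f(\theta^*)|\leq \hat{M}\hat{\xi}^{n_k}.
$$
Next, I would relate $n_k$ to $k$. Since every transition from Step~1 back to Step~1 in Algorithm \ref{algo2} requires at least one pass through Step~2, which increments $n$ by one, we have $T(k)=n_k-n_{k-1}\geq 1$ for all $k\geq 1$; together with $n_0=0$ this yields the trivial but sufficient lower bound $n_k\geq k$. Because $\hat{\xi}\in(0,1)$, this implies $\hat{\xi}^{n_k}\leq\hat{\xi}^k$, and choosing $\tilde{M}=\hat{M}$ and $\tilde{\xi}=\hat{\xi}$ produces (\ref{tilde-conv-rate}). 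The claimed dependence of $\tilde{M}$ and $\tilde{\xi}$ on $m,M,N,\rho$ is inherited from the analogous dependence of $\hat{M},\hat{\xi}$ in Theorem \ref{exponential-convergence-1}.

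I expect no real obstacle, as the argument is essentially a reindexing. The uniform bound $T(k)\leq A$ established in the proof of Theorem \ref{exponential-convergence-1} is not strictly needed for the rate bound itself (it plays the complementary role of bounding the computational work per outer iteration, giving $n_k\leq Ak$); the only ingredient required to transfer the geometric rate from the $n$-index to the $k$-index is the trivial inequality $n_k\geq k$ noted above.
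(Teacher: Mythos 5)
Your argument is correct and is essentially the paper's own: the paper gives no separate proof of Theorem \ref{exponential-convergence-2}, stating only that it follows immediately from Theorem \ref{exponential-convergence-1}, and your route is exactly that reindexing, transferring the geometric rate from the $n$-index to the $k$-index via the identification $\theta_k=\hat{\theta}_{n_k}$, $g_k=\hat{g}_{n_k}$ of Remark \ref{hat-no-hat}. Your side remark is also accurate: once Theorem \ref{exponential-convergence-1} is taken as given, the trivial inequality $n_k\geq k$ suffices, and the uniform bound $T(k)\leq A$ (which the paper cites here) is only needed inside the proof of Theorem \ref{exponential-convergence-1} itself and for bounding the work per outer iteration via $n_k\leq Ak$.
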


\subsection{Applications of Algorithm~\ref{algo1}} \label{applications}

In this section, we discuss
some applications of
\cob
Algorithm \ref{algo1} in information theory.

Consider a finite-state channel satisfying (\ref{channel-model}.$a$)-(\ref{channel-model}.$c$) and assume that all the matrices in $\Pi_{F, \delta}$ are
analytically
parameterized
by $\theta \in \Theta^\circ$, where $\Theta$ is a compact convex subset of $\mathbb{R}^d$, $d \in \mathbb{N}$. Setting
$$
f(\theta)=I(X(\theta); Y(\theta))
$$
and
$$
f_k(\theta) =
H(X_2(\theta)|X_1(\theta))
+H(Y_{k+1}(\theta)|Y_1^{k}(\theta))-H(X_{k+1}(\theta), Y_{k+1}(\theta)|X_1^k(\theta), Y_1^k(\theta)),
$$
we derive from \cite{han11} that (\ref{C-rho}) holds.
So, when $f(\theta)$ is strongly concave
with respect to $\theta$ (this may hold true for some special channels, see, for example, \cite{hm09b} and \cite{LiHan}) as in (\ref{little-m}), our algorithm applied to $\{f_k(\theta)\}_{k=0}^\infty$ converges 
exponentially fast in the number of steps to the maximum value of $f(\theta)$.
\cob
This, together with Theorem~\ref{exponential-convergence-2} and the easily verifiable fact
that the computational complexity of $f_k(\theta)$ is at most exponential in $k$, leads to the conclusion that Algorithm~\ref{algo1}, when applied to $\{f_k(\theta)\}_{k=0}^\infty$ as above, achieves exponential accuracy in exponential time. We now trade exponential time for polynomial time at the expense of accuracy. For any fixed $r \in \mathbb{R}_+$ and any large $k$, choose the largest $l \in \mathbb{N}$ such that $k=\lceil r \log l \rceil$. Substituting this into (\ref{tilde-conv-rate}), we have
$$
|f_{\lceil r \log l \rceil}(\theta_{\lceil r \log l \rceil})-f(\theta^*)|\leq \tilde{M} l^{r \log \tilde{\xi}}.
$$
In other words, as summarized in the following theorem, we have shown that Algorithm~\ref{algo1}, when used to compute the channel capacity as above, achieves polynomial accuracy in polynomial time.
\begin{thm} \label{general-channel}
For a general finite-state channel satisfying (\ref{channel-model}.a)-(\ref{channel-model}.c) and parameterized as above, if $I(X(\theta); Y(\theta))$ is strongly concave with respect to $\theta \in \Theta$ and achieves its unique maximum in $\Theta^\circ$, then there exists an algorithm computing its fixed order Markov capacity that achieves polynomial accuracy in polynomial time.
\end{thm}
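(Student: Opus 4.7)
The plan is to verify that the capacity optimization problem, under the stated hypotheses, fits exactly the framework of Algorithm \ref{algo1} and then to appeal to Theorem \ref{exponential-convergence-2} in a way that trades exponential accuracy for polynomial accuracy by truncating the block length at $k = \lceil r \log l \rceil$. I would first set $f(\theta) = I(X(\theta); Y(\theta))$ and let $f_k(\theta)$ be the explicit finite-block quantity displayed just before the theorem statement, namely
\begin{equation*}
f_k(\theta) = H(X_2(\theta)\mid X_1(\theta)) + H(Y_{k+1}(\theta)\mid Y_1^{k}(\theta)) - H(X_{k+1}(\theta), Y_{k+1}(\theta)\mid X_1^{k}(\theta), Y_1^{k}(\theta)).
\end{equation*}
By the analyticity of the parameterization together with Assumptions (\ref{channel-model}.$a$)--(\ref{channel-model}.$c$), one can invoke the derivative estimates in \cite{han11} to conclude that $f$ and $f_k$ are twice continuously differentiable in $\Theta^\circ$ and that (\ref{C-rho}) holds for some $N>0$ and $0<\rho<1$; combined with the assumed strong concavity of $f$ and the fact that its unique maximizer lies in $\Theta^\circ$, this puts us exactly in the setting of Section \ref{algorithm}.

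Once this is in place, I would apply Theorem \ref{exponential-convergence-2} to the sequence $\{f_k(\theta)\}_{k=0}^{\infty}$ to conclude that the $k$-th iterate $\theta_k$ produced by Algorithm \ref{algo1} satisfies
\begin{equation*}
|g_k(\theta_k) - f(\theta^*)| \leq \tilde{M} \tilde{\xi}^{k}
\end{equation*}
for some $\tilde{M}>0$ and $0<\tilde{\xi}<1$. The next step is a cost accounting: each evaluation of $f_k(\theta)$ (and of $\nabla f_k$, whose computation is needed in Step 2 of Algorithm \ref{algo1}) amounts to evaluating entropies of joint distributions over blocks of length $k$ over finite alphabets, which can be carried out in time at most exponential in $k$; moreover, the uniform bound on $T(k)$ established in the proof of Theorem \ref{exponential-convergence-1} guarantees that the total number of check-condition calls is $O(k)$, so the overall runtime of the algorithm up through step $k$ is at most exponential in $k$.

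Finally, I would reparameterize to trade exponential accuracy in $k$ against the exponential per-step cost. For any fixed $r \in \mathbb{R}_+$ and any large $l \in \mathbb{N}$, choose $k = \lceil r \log l \rceil$; then running Algorithm \ref{algo1} through iteration $k$ takes time polynomial in $l$ (of degree controlled by $r$ and the alphabet sizes), while substituting into the exponential bound gives
\begin{equation*}
|g_{\lceil r \log l \rceil}(\theta_{\lceil r \log l \rceil}) - f(\theta^*)| \leq \tilde{M}\, l^{\,r \log \tilde{\xi}},
\end{equation*}
which is polynomial decay in $l$, as desired.

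The proof is largely a bookkeeping argument once the earlier machinery is in place, so I do not anticipate any conceptually hard step; the only substantive point that merits care is the verification of (\ref{C-rho}) for this specific $\{f_k\}$, which relies on invoking the derivative estimates from \cite{han11} at orders $\ell = 0, 1, 2$ rather than on a self-contained calculation. Beyond that, the trade-off $k = \lceil r \log l \rceil$ is the only ingredient needed to convert Theorem \ref{exponential-convergence-2} into the claimed polynomial-accuracy-in-polynomial-time statement.
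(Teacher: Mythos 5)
Your proposal is correct and follows essentially the same route as the paper: set $f_k$ to the finite-block mutual information expression, invoke \cite{han11} for (\ref{C-rho}), apply Theorem~\ref{exponential-convergence-2} for exponential accuracy, and trade the exponential per-iteration cost for polynomial time via the substitution $k=\lceil r\log l\rceil$. The only (harmless) addition is your explicit appeal to the uniform bound on $T(k)$ in the cost accounting, which the paper leaves implicit.
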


In the following, we show that for certain special families of finite-state channels, we do get a stronger convergence result than that in Theorem \ref{general-channel}. In particular, for the following two examples, Algorithm~\ref{algo1} can be used to compute the channel capacity, achieving exponential accuracy in polynomial time.

\subsubsection{A noisy channel with one state} \label{Section BEC}

In this section, we consider the Markov capacity of a binary erasure channel (BEC) under the $(1, \infty)$-RLL constraint. This channel can be mathematically characterized by the input-output equation
\begin{equation} \label{BEC}
Y_n=X_n \cdot E_n,
\end{equation}
where $\{X_n\}_{n=1}^\infty$ is the input stationary Markov chain taking values in $\{1,2\}$ such that $\{22\}$ is a {\it forbidden set} (see, e.g.,~\cite{lm95}),
and $\{E_n\}_{n=1}^\infty$ is an i.i.d. process taking values in $\{0,1\}$ with
$$
P(E_n=0)=\varepsilon, \quad P(E_n=1)=1-\varepsilon
$$
for $0<\varepsilon<1$. Here we note that the BEC given above can be viewed as a degenerate finite-state channel with only one state. In the following, we will compare the channel capacity when $\{X_n\}_{n=1}^\infty$ is a first-order stationary Markov chain with the capacity when $\{X_n\}_{n=1}^\infty$ is a second-order stationary Markov chain. In particular, Algorithm \ref{algo1} will be used to evaluate the first-order Markov capacity, which, compared to a lower bound for the second-order Markov capacity, will lead to the conclusion that higher order memory in the channel input may increase the Markov capacity.

For the first case, suppose that $\{X_n\}_{n=1}^\infty$ is a first-order stationary Markov chain with the transition probability matrix (indexed by $1$, $2$)
$$
\Pi=\left[ \begin{array}{cc}
1-\theta & \theta\\
1 & 0
\end{array} \right]
$$
for $0<\theta<1$.  It has been established in~\cite{LiHan2018} that the mutual information rate $I(X(\theta); Y(\theta))$ of the BEC channel (\ref{BEC}) can be computed as
$$
I(X(\theta); Y(\theta))= (1-\varepsilon)^2 \sum_{l=0}^{\infty} H(X_{l+2}(\theta)|X_1(\theta)) \varepsilon^l,
$$
which is strictly concave with respect to $\theta$. Now, setting $f(\theta)= I(X(\theta); Y(\theta))$, one verifies, through straightforward computation, that
$$
f(\theta) = \lim_{k\rightarrow \infty} f_k (\theta),
$$
where
{\small\begin{align*}
f_0(\theta)= &f_1(\theta)\triangleq (1-\varepsilon)^2 \frac{-\theta \log \theta-(1-\theta) \log (1-\theta)}{1+\theta},\\
f_k(\theta) \triangleq &(1-\varepsilon)^2 \frac{-\theta \log \theta-(1-\theta) \log (1-\theta)}{1+\theta} \\
&+(1-\varepsilon)^2 \sum_{l=2}^{k} \left\{  \frac{1}{1+\theta} H\left(\frac{1-(-\theta)^{l+1}}{1+\theta}\right) \right\} \varepsilon^{l-1} +(1-\varepsilon)^2 \sum_{l=2}^{k} \left\{\frac{\theta}{1+\theta} H\left(\frac{1-(-\theta)^l}{1+\theta}\right) \right\} \varepsilon^{l-1}
\end{align*}}
\!\!for $k\geq 2$ and $H(p)\triangleq-p\log p - (1-p) \log(1-p)$ is the binary entropy function.
In what follows,
assuming $\varepsilon=0.1$,
we will show that Algorithm~\ref{algo1} can be applied to compute the first-order Markov capacity of the channel (\ref{BEC}), i.e., the maximum of $f(\theta)$ over all $\theta \in [0, 1]$.


First of all, we claim that $f(\theta)$ achieves its unique maximum within the interval $[0.25, 0.55]$ and therefore in the interior of $\Theta\triangleq[0.2,0.6]$. To see this, noting that $f_k(\theta)\leq f(\theta)$ for any $\theta$ and through evaluating the elementary function $f_{100}(\theta)$, we have
$$0.442239<\max_{\theta\in [0.25, 0.55]} f_{100}(\theta)<0.442240$$
and therefore
\begin{equation} \label{f100}
\max_{\theta \in [0.25, 0.55]} f(\theta) \geq  0.442239,
\end{equation}
where (\ref{f100}) follows from the fact that $f_k(\theta)$ is monotonically increasing in $k$.
On the other hand, using the stationarity of $\{Y_n\}_{n=1}^\infty$ and the fact that conditioning reduces entropy, we have
\begin{align*}
f(\theta)=I(X(\theta); Y(\theta))= H(Y)-H(\varepsilon) \leq H(Y_3(\theta)|Y_{1}(\theta),Y_2(\theta))-H(\varepsilon),
\end{align*}
where $H(Y)$ is the entropy rate of $\{Y_n\}_{n=1}^\infty$. Then, by straightforward computation, we deduce that
$$
\max_{\theta\in [0,0.25] \cup [0.55,1]} f(\theta) \leq \max_{\theta\in [0,0.25] \cup [0.55,1]} H(Y_3(\theta)|Y_{1}(\theta), Y_2(\theta))-H(\varepsilon) < 0.414483,
$$
which, together with (\ref{f100}), yields
$$
\max_{\theta\in [0,0.25] \cup [0.55,1]} f(\theta) < \max_{\theta \in [0.25, 0.55]} f(\theta),
$$
as desired.

Next, we will verify that (\ref{C-rho}), (\ref{m-M}) and (\ref{little-m}) are satisfied for all $\theta \in [0.2, 0.6]$. 
Note that for $k\geq 2$ we have
$$
f_k(\theta)-f_{k-1}(\theta) = (1-\varepsilon)^2\left[ \frac{1}{1+\theta} H\left(\frac{1-(-\theta)^{k+1}}{1+\theta}\right)+\frac{\theta}{1+\theta} H\left(\frac{1-(-\theta)^k}{1+\theta}\right) \right] \varepsilon^{k-1}.
$$
\cob
This implies that
\cob
for any $k \geq 5$ and any $\theta \in [0.2, 0.6]$,
$$
|f_k(\theta)-f_{k-1}(\theta)|\leq (1-\varepsilon)^2 \varepsilon^{k-1}=8.1\times 0.1^k.
$$
This, together with the easily verifiable fact that
$0.378\leq f_5(\theta)\leq0.443$ for $\theta \in [0.2, 0.6]$,
\cob
further implies that
$$
|f_k(\theta)-f(\theta)|\leq 0.9\times 0.1^k \quad \mbox{and} \quad 0.37 \leq f_k(\theta)\leq 0.45
$$
\cob
for all $k \geq 5$ and $\theta\in [0.2,0.6]$.

Going through similar arguments,
we obtain that, for any $k \geq 13$ and any $\theta \in [0.2, 0.6]$,
\cob
$$
|f_k' (\theta)-f_{k-1}' (\theta)| \leq 72.9 \times 0.1^{k}, \quad |f_k'(\theta)-f'(\theta)|\leq 8.1\times 0.1^k,
$$
and
$$
-0.44\leq f_k'(\theta)\leq 0.76,
$$
and, for any $k \geq 18$ and any $\theta \in [0.2, 0.6]$,
$$
|f_k''(\theta)-f_{k-1}''(\theta)|\leq 370.575\times 0.1^{k}, \quad |f_{k}''(\theta)-f''(\theta)| \leq 41.175\times 0.1^{k},
$$
\cob
and
$$
-5.81 \leq f_k''(\theta)\leq -1.88.
$$
To sum up, we have shown that (\ref{C-rho}) is satisfied with $N=371$ and $\rho=0.1$, (\ref{m-M}) is satisfied with $M=5.81$ and (\ref{little-m}) is satisfied with $m=1.88$.
Under these choices of the constants,
\cob
direct calculation shows that $k_0=18$ is sufficient for Lemma \ref{initial}.
As a result, Algorithm~\ref{algo1} is applicable to the channel (\ref{BEC}).
\cob
Observing that, by its definition, the computational complexity of $f_k(\theta)$ is polynomial in $k$,
we conclude that Algorithm~\ref{algo1} achieves exponential accuracy in polynomial time.

Now, applying Algorithm~\ref{algo1} to the sequence $\{f_k(\theta): k \geq 18\}$ over $\Theta = [0.2, 0.6]$ with $\alpha=0.4$, $\beta=0.9$ and the initial point $\theta_0=0.5$, we obtain that
$$
\theta_{110}\approx 0.395485, \quad f_{110}(\theta_{110})\approx 0.442239.
$$
Furthermore, under the settings given above, $\xi$ and $\eta$ can be chosen such that $\xi=\eta<0.767$. It now follows from (\ref{C-rho}), (\ref{n_k})
and
\cob
$\hat{\theta}_{n_k} = \theta_k$ (see Remark~\ref{hat-no-hat}) that
{\small $$
|f_{110}(\theta_{110})-f(\theta^*)| \leq |f_{110} (\theta_{110})-f(\theta_{110})|+|f(\theta_{110})-f(\theta^*)| \leq 2.621\times 10^{-7},
$$}
\!\!which further implies that when the input is a first-order Markov chain, the capacity of the BEC channel (\ref{BEC}) can be bounded as
\begin{equation} \label{BECcapacity}
0.4422382\leq f(\theta^*)\leq 0.4422398.
\end{equation}

We now consider the case when the input is a second-order stationary Markov chain, whose transition probability matrix (indexed by $11, 12$ and $21$ only since $22$ is prohibited by the $(1,\infty)$-RLL constraint) is given by
$$
\left[\begin{array}{ccc}
p&1-p&0\\
0&0&1\\
q&1-q&0
\end{array}\right],
$$
where $0<p, q<1$. For this case,
from the Birch lower bound (see, e.g., Lemma 4.5.1 of \cite{co06}), we have
$$H(Y_6|Y_5,Y_4,Y_3, X_2, X_1)-H(\varepsilon)\leq H(Y)-H(\varepsilon)=I(X;Y).$$
It can then be verified by direct computation that, when $p=0.597275$ and $q=0.614746$,
\cob
$$
H(Y_6|Y_5,Y_4,Y_3, X_2, X_1)-H(\varepsilon)=0.442329,
$$
which is a lower bound on the second-order Markov capacity yet strictly larger than the
upper bound
\cob
on the first-order Markov capacity given in (\ref{BECcapacity}). Hence we can draw the conclusion that for the BEC channel with Markovian inputs under the $(1,\infty)$-RLL constraint, an increase of the Markov order of the input process from $1$ to $2$ does increase the channel capacity.

\subsubsection{A noiseless channel with two states} \label{noiseless}

In this section, we consider a noiseless finite-state channel with two channel states, for which we show that Algorithm \ref{algo1} can be applied to show that higher order memory can yield larger Markov capacity.

More precisely, the channel input $\{X_n\}_{n=1}^\infty$ is a first-order stationary Markov chain
taking values from
the alphabet $\mathcal{A}=\{0, 1\}$ and,
\cob
except at time $0$, the channel state $\{S_n\}_{n=1}^\infty$ is determined by the channel input, that is, $S_n=X_n$, $n=1, 2, \dots$. The channel is characterized by the following input-output equation:
\begin{equation} \label{Brian}
Y_n=\phi(S_{n-1}, X_n), \quad n = 1, 2, \dots,
\end{equation}
where $\phi$ is a deterministic function with $\phi(0, 0)=1, \phi(0, 1)=0, \phi(1, 0)=0$ and $\phi(1, 1)=0$. Note that $\phi$ naturally induces a sliding block code that maps the full $\mathcal{A}$-shift $\mathcal{S}$ to the shift of finite type $\mathcal{S}_{\mathcal{F}}$, where the forbidden set $\mathcal{F}$ is $\{101\}$. It can be readily verified that the Shannon capacity of (\ref{Brian}) is equal to its stationary capacity~\cite{Gray2011}, which can be computed as the largest eigenvalue of the adjacency matrix of the $3$rd higher block shift of $\mathcal{S}_{\mathcal{F}}$ and is approximately equal to $0.562399$ (see Chapter $4$ and $13$ of \cite{lm95} for more details). In what follows, we will focus on the Markov capacity of (\ref{Brian}); more specifically, we will compute the Markov capacity when the input $\{X_n\}_{n=1}^\infty$ is an i.i.d. process and a first-order stationary Markov chain, which will be compared with the Shannon capacity.

It can be easily verified that the mutual information rate of (\ref{Brian}) can be computed as
$$
I(X; Y)=\lim_{k \rightarrow \infty}H(Y_{k+1}|Y_{1}^{k})-\frac{1}{k} H(Y_1^k|X_1^k)=\lim_{k\rightarrow \infty}H(Y_{k+1}|Y_{1}^{k})=H(Y).
$$
\cob
When $\{X_n\}_{n=1}^\infty$ is a stationary Markov chain, the output $\{Y_n\}_{n=1}^\infty$ is a hidden Markov chain with an unambiguous symbol whose entropy rate can be computed by the following formula \cite{gm05}:
\begin{align} \label{unambiguous_formula}
H(Y)=\sum_{n=1}^\infty P(Y_1^n=(1,\underbrace{0,\dots,0}_{n-1})) H(Y_{n+1}|Y_1^n=(1, \underbrace{0,\dots,0}_{n-1})).
\end{align}
This formula will play a key role in our analysis, detailed below.

We first consider the degenerated case that $\{X_n\}_{n=1}^\infty$ is an i.i.d. process. Letting $\theta$ denote $P(X_1=0)$, we note that the Markov chain $\{(X_{n-1}, X_{n})\}_{n=2}^\infty$ has the following transition probability matrix (indexed by $00, 01, 10, 11$)
\begin{equation} \label{rank-1}
\left[\begin{array}{cccc}
\theta&1-\theta&0&0\\
0&0&\theta&1-\theta\\
\theta&1-\theta&0&0\\
0&0&\theta&1-\theta \notag
\end{array}
\right],
\end{equation}
whose left eigenvector
corresponding to the largest eigenvalue
\cob
is
$$
(\pi_1(\theta), \pi_2(\theta), \pi_3(\theta), \pi_4(\theta))=(\theta^2, \theta(1-\theta), \theta(1-\theta), (1-\theta)^2).
$$
Using (\ref{unambiguous_formula}), we have
$$
H(Y)=-\sum_{l=0}^{\infty} \pi_1(\theta) \mathbf{r} (B_{\theta})^l \mathbf{1} \log \frac{\mathbf{r} (B_\theta)^l \mathbf{1}}{\mathbf{r} (B_\theta)^{l-1} \mathbf{1}}-\sum_{l=0}^{\infty} \pi_1(\theta) \mathbf{r} (B_\theta)^{l-1} \mathbf{c} \log \frac{\mathbf{r} (B_\theta)^{l-1} \mathbf{c}}{\mathbf{r} (B_\theta)^{l-1} \mathbf{1}},
$$
where $\mathbf{r}=(1-\theta, 0, 0)$, $\mathbf{c}=(0, \theta, 0)^T$, $\mathbf{1}=(1, 1, 1)^T$,
$$
B_\theta=\left[\begin{array}{ccc}
0&\theta&1-\theta\\
1-\theta&0&0\\
0&\theta&1-\theta
\end{array}\right],
$$
and both $\mathbf{r} (B_\theta)^{-1} \mathbf{1}, \mathbf{r} (B_\theta)^{-1} \mathbf{c}$ should be interpreted as $1$.

Setting $f(\theta)=H(Y)$, we note that
$$
f(\theta) = \lim_{k \to \infty} f_k(\theta),
$$
where
$$
f_k(\theta)=-\sum_{l=0}^{k} \pi_1(\theta) \mathbf{r} (B_\theta)^l \mathbf{1} \log \frac{\mathbf{r} (B_\theta)^l \mathbf{1}}{\mathbf{r} (B_\theta)^{l-1} \mathbf{1}}-\sum_{l=0}^{k} \pi_1(\theta) \mathbf{r} (B_\theta)^{l-1} \mathbf{c} \log \frac{\mathbf{r} (B_\theta)^{l-1} \mathbf{c}}{\mathbf{r} (B_\theta)^{l-1} \mathbf{1}}, \quad k \geq 0.
$$
Similarly as in the previous section, we can show that
$$
\max_{\theta \in [0, 0.41] \cup [0.89, 1]} f(\theta) < \max_{\theta \in [0.41, 0.89]} f(\theta),
$$
which means that $f(\theta)$ will achieve its maximum within the interior of $[0.4, 0.9]$. Moreover, through tedious but similar evaluations as in the previous example,
we can choose
(below, rather than a constant, $N$ is a polynomial in $k$, but the proof of Theorem~\ref{exponential-convergence-1} carries over almost verbatim)
\cob
$$
k_0=120, \quad N=(374.945k^2+6207.73k+46587.2),\quad \rho=0.875, \quad m=1.2, \quad M=10.37.
$$
Though the function $f(\theta)$ is not concave near $\theta=0$, tedious yet straightforward computation indicates that $f''(\theta)\leq f_{120}''(\theta)+N\rho^{120}<0$ for any $\theta\in[0.4, 0.9]$, which immediately implies that
$f(\theta)$ is strongly concave
within the interior of the interval $[0.4, 0.9]$. Then, similarly as in
\cob
Section~\ref{Section BEC},
one verifies that, when applied to the channel in (\ref{Brian}),
\cob
Algorithm~\ref{algo1} achieves exponential accuracy in polynomial time.

Letting $\alpha=0.4, \beta=0.9$, we apply our algorithm to the sequence $\{f_k(\theta): k \geq 120\}$ with $\Theta\triangleq[0.4, 0.9]$, $\theta_0=0.5$, $\eta=\xi=0.901061$, and we obtain that
$$
\theta_{450}\approx 0.6257911, \quad f_{450}(\theta_{450})\approx 0.4292892.
$$
Now from (\ref{C-rho}), (\ref{n_k}) and the fact that $\hat{\theta}_{n_k} = \theta_k$, we conclude
{\small $$
|f_{450}(\theta_{450})-f(\theta^*)| \leq |f_{450} (\theta_{450})-f(\theta_{450})|+|f(\theta_{450})-f(\theta^*)| \leq 0.0001745,
$$}
\!\!which further implies
\begin{equation} \label{Symbolic dynamics capacity}
0.4291146 \leq f(\theta^*)\leq 0.4294638
\end{equation}
for the i.i.d. case.

Now, we consider the case that $\{X_n\}_{n=1}^{\infty}$ is a genuine first-order stationary Markov process, and assume the Markov chain $\{(X_{n-1}, X_n)\}_{n=2}^\infty$ has the following transition probability matrix (indexed by $00, 01, 10, 11$)
$$
\begin{pmatrix}
p & 1-p & 0 & 0\\
0 & 0 & q & 1-q \\
p & 1-p & 0 & 0\\
0 & 0 & q & 1-q
\end{pmatrix},
$$
where $0< p, q<1$. Again, straightforward computation shows that for $p=0.674521, q=0.595176$, $H(Y_4| Y_3, X_2, X_1)$ is approximately $0.513259$, which gives a lower bound on $H(Y)$. Comparing this lower bound with the upper bound in (\ref{Symbolic dynamics capacity}), we conclude that the capacity is increased when increasing the Markov order of the input from $0$ to $1$.



\section{The Second Algorithm: without Concavity} \label{without concavity}

In this section, we consider the optimization problem (3) for the case when $f$ may {\bf not} be concave.

For a non-convex optimization problem with a continuously differentiable target function $f$ and a bounded domain, conventionally there are two major methods for finding its solution: the Frank-Wolfe method and the method through the \L ojasiewicz inequality (see, e.g., \cite{ab06}). However,
both of these methods
\cob
in general tend to fail in our setting: for the Frank-Wolfe method, the computation for finding the feasible ascent direction and the verification of the relevant gradient condition (which is necessary for the convergence of this method) both depend on the existence of
an exact formula for $\nabla f$
\cob
and a tractable description of $\Theta$, which is however not available in our case; on the other hand, due to the fact that our target function
is the limit of
\cob
a sequence of approximating functions, the method through the \L ojasiewicz inequality necessitates a ``uniform'' version of the \L ojasiewicz inequality over all sequences of approximating functions, which does not seem to hold true in our setting.

Motivated by Algorithm \ref{algo1}, we propose in the following our second algorithm to efficiently solve the optimization problem (3) whose target function may not be concave. Except for using the sequence $\{\nabla f_k\}_{k=0}^\infty$ as the ascent direction in each iteration, an additional check condition is proposed for the choice of the step size. This check condition is chosen carefully to ensure an appropriate pace for the decay of $\nabla f_k$, which turns out to be crucial for the convergence of this algorithm.

Similarly as in Section \ref{algorithm}, we need the following lemma before presenting our second algorithm.
\begin{lem} \label{initial2}
Assume the function $f$ has $s$ stationary points $\{\theta_i^*\}_{i=1}^s$
which are all contained in $\Theta^\circ$, and that $f$ achieves its maximum in $\Theta^\circ$.
\cob
If, for each $k$,
\cob
$f_k$ also has finitely many stationary points which are all contained in $\Theta^\circ$, then there exists a non-negative integer $k_0$ such that
\begin{enumerate}
\item[(a)] $\rho^{1/3}+\rho^{2 k_0/3}<1$ and $\displaystyle\frac{2 N \rho^{k_0}}{1-\rho}\leq \frac{\delta}{8}$, where $\delta\triangleq \max\limits_{\theta^*_i: 1\leq i\leq s } f(\theta_i^*) - \max\limits_{\theta\in \partial \Theta} f(\theta)>0$;
\item[(b)] There exists $y_0\in \mathbb{R}$ such that for any fixed $b$ with $0<b<1$, we have
$$
\emptyset\subsetneq  B_{k_0}\subseteq C_{k_0}\subseteq \Theta^\circ,\quad A_{k_0} \cap B_{k_0}\neq \emptyset \quad \mbox{and} \quad dist(C_{k_0}, \partial\Theta)>0,
$$
where
\begin{align*}
A_{k_0}&\triangleq \left\{ x\in \Theta^\circ: ||\nabla f_{k_0} (x)||_2\geq \frac{2N \rho^{k_0/3}}{1-b}\right\},\\
B_{k_0}&\triangleq \{x\in \Theta: f_{k_0}(x) \geq y_0\},\\
C_{k_0}&\triangleq \left\{x\in\Theta: f_{k_0}(x)\geq y_0- \frac{\delta}{8}\right\}.
\end{align*}
\end{enumerate}
\end{lem}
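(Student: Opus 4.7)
The proof will proceed in exactly the same spirit as that of Lemma~\ref{initial}, adapted to the non-concave setting, and is largely a matter of choosing $k_0$ and $y_0$ so that several asymptotic bounds hold simultaneously. Part (a) is purely quantitative: since $0<\rho<1$, we have $\rho^{1/3}<1$ while $\rho^{2k_0/3}\to 0$ and $\rho^{k_0}\to 0$ as $k_0\to\infty$, so both inequalities hold for all sufficiently large $k_0$. The positivity $\delta>0$ is used implicitly here and comes from the hypothesis that $f$ attains its maximum in $\Theta^\circ$ strictly above all boundary values.

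For part (b), my plan is to first choose $y_0$ strictly inside the open interval
\[
\Bigl(\max_{\theta\in \partial\Theta} f(\theta)+\tfrac{\delta}{4},\ \max_i f(\theta_i^*)-\tfrac{\delta}{4}\Bigr),
\]
which is non-empty by the definition of $\delta$. Then, using the uniform bound $\|f_{k_0}-f\|_\infty\le N\rho^{k_0}$ from (\ref{C-rho}), I will show that for $k_0$ large, $f_{k_0}<y_0-\delta/8$ on $\partial\Theta$ and $f_{k_0}(\theta^*)>y_0+\delta/8$ at a global maximizer $\theta^*\in\{\theta_i^*\}$, exactly in the manner of (\ref{max-boundary_of_k}) and (\ref{interior}). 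This immediately gives $\emptyset\subsetneq B_{k_0}\subseteq C_{k_0}\subseteq \Theta^\circ$, and compactness of $C_{k_0}$ together with $C_{k_0}\cap\partial\Theta=\emptyset$ yields $\mathrm{dist}(C_{k_0},\partial\Theta)>0$.

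The only genuinely new ingredient, and the main obstacle, is the non-emptiness of $A_{k_0}\cap B_{k_0}$. The strategy is to exhibit a single test point $x_0\in\Theta^\circ$, chosen independently of $k_0$, at which both $f(x_0)>y_0$ (by a fixed positive margin) and $\nabla f(x_0)\ne 0$. Such an $x_0$ exists by a simple counting argument: the superlevel set $\{f>y_0\}$ is open and contains $\theta^*$, hence it contains uncountably many points, while the hypothesis tells us $f$ has only the $s$ stationary points $\theta_1^*,\dots,\theta_s^*$, so there must be some $x_0\in\{f>y_0\}\cap\Theta^\circ$ with $\nabla f(x_0)\ne 0$. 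Setting $c:=\|\nabla f(x_0)\|_2>0$ and applying (\ref{C-rho}) again, I will get $\|\nabla f_{k_0}(x_0)\|_2\ge c-N\rho^{k_0}\ge c/2$ and $f_{k_0}(x_0)\ge f(x_0)-N\rho^{k_0}>y_0$ for $k_0$ large. Since the threshold $2N\rho^{k_0/3}/(1-b)$ tends to $0$ as $k_0\to\infty$ (with $b\in(0,1)$ fixed), eventually $c/2\ge 2N\rho^{k_0/3}/(1-b)$, so $x_0\in A_{k_0}\cap B_{k_0}$.

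Finally, taking $k_0$ large enough to simultaneously satisfy the requirements of (a), the uniform-convergence inequalities used above, and the gradient lower bound at $x_0$ completes the proof. I expect the subtle point to be the selection of the test point $x_0$: one must carefully exploit the finiteness of the stationary sets of $f$ (and the $f_k$) to separate ``high-value'' points from ``critical'' points, since in the non-concave setting the geometry of the superlevel sets $B_{k_0}$ can be quite complicated.
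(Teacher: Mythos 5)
Your proposal is correct and follows essentially the same route as the paper: part (a) and the chain $\emptyset\subsetneq B_{k_0}\subseteq C_{k_0}\subseteq\Theta^\circ$ with $dist(C_{k_0},\partial\Theta)>0$ are handled exactly as in Lemma \ref{initial}, and the non-emptiness of $A_{k_0}\cap B_{k_0}$ rests on the same ingredients the paper uses (finiteness of the stationary set of $f$, openness of the superlevel set above $y_0$, uniform convergence of $f_{k}$ and $\nabla f_{k}$, and the vanishing threshold $2N\rho^{k_0/3}/(1-b)$). The only cosmetic difference is that you fix a single non-stationary test point $x_0$ in the superlevel set, whereas the paper argues via the auxiliary sets $D_k$ and $B'$ and the convergence of $D_k^c$ to the stationary set; this is a harmless simplification, and you share with the paper the same implicit dependence of the ``sufficiently large $k_0$'' on the fixed $b$.
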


\begin{proof}
By replacing what was assumed to be the unique maximum of $f$
\cob
with $\max_{\theta^*_i: 1\leq i\leq s } f(\theta_i^*)$, a similar argument as in the proof of
Lemma \ref{initial}($a$)
\cob
yields that there exists $y_0<y^*-\frac{\delta}{4}$ such that for all sufficiently large $k$, $\emptyset\subsetneq  B_{k}\subseteq C_{k}\subseteq \Theta^\circ$ and $dist(C_{k}, \Theta^c)>0$,
where
\begin{align*}
y^*=\max_{\theta^*_i: 1\leq i\leq s } f(\theta_i^*), \quad B_{k}\triangleq \{x\in \Theta: f_{k}(x) \geq y_0\}\quad \mbox{and} \quad C_{k}\triangleq \left\{x\in\Theta: f_{k}(x)\geq y_0- \frac{\delta}{8}\right\}.
\end{align*}
Now, for any $k$ and any fixed $0<b<1$, let
$$A_{k}\triangleq \left\{ x\in \Theta^\circ: ||\nabla f_{k} (x)||_2\geq \frac{2N \rho^{k/3}}{1-b}\right\}.$$
We claim that for large enough $k$, $A_k \cap B_k\neq \emptyset$. To see this, define
$$
D_k\triangleq \left\{x\in \Theta^\circ: ||\nabla f(x)||_2\geq \frac{2N \rho^{k/3}}{1-b}+N\rho^k\right\} \quad \mbox{and} \quad B'\triangleq \left\{x\in \Theta: f(x)\geq y_0+\frac{\delta}{8}\right\}.
$$
It then follows
from (\ref{C-rho}), the continuity of $f$
\cob
and the fact $y_0+\delta/8<y^*$ that $D_k \subseteq A_k, B' \subseteq B_k$ for all large enough $k$ and $B'$ has a non-empty interior.
Observing that $D^c_k$ converges
\cob
to the finite set consisting of all stationary points of $f$, we deduce that $D_k\cap B'\neq \emptyset$ and therefore $A_k\cap B_k\neq \emptyset$ for sufficiently large $k$ and therefore establish the claim. Finally, it immediately follows from this claim and the observation that $(a)$ trivially holds
\cob
for $k_0$ sufficiently large that there exists $k_0$ such that $(a)$ and $(b)$ are both satisfied.
\end{proof}

Recalling that $f$
\cob
and each $f_k$ are assumed to have finitely many stationary points in $\Theta^\circ$, we now present our second algorithm.
\begin{algorithm} \label{algo3}  (The second modified gradient ascent algorithm)

\textbf{Step $0$.}
Choose $k_0$, $y_0$ and $0<b<1$
\cob
such that Lemma \ref{initial2} is satisfied. Set $k=0$, $g_0=f_{k_0}$ and choose $\alpha \in (0, 0.5)$, $\beta \in (0, 1)$, $\theta_0 \in A_{k_0}\cap B_{k_0}$ where $A_{k_0}$ and $B_{k_0}$ are defined as in Lemma \ref{initial2}.

\textbf{Step $1$.} Increase $k$ by $1$. Set $t=1$ and $g_k=f_{k+k_0}$.

\textbf{Step $2$.} Set
$$
\tau=\theta_{k-1}+t\nabla g_{k-1} (\theta_{k-1}).
$$
If $\tau \not \in \Theta^\circ$ or
\cob
$$
||\nabla g_k (\tau)||_2< \displaystyle\frac{2N \rho^{k/3}}{1-b}
$$
or
$$
g_k(\tau) < g_k(\theta_{k-1}) + \alpha t ||\nabla g_{k-1}(\theta_{k-1})||_2^2,
$$
set $t=\beta t$ and go to Step $2$, otherwise set $\theta_k=\tau$ and go to Step $1$.
\end{algorithm}
\medskip
\begin{rem}
The constants in Step 0 are chosen to ensure the convergence of the algorithm. And the existence of $\theta_0$ follows from Lemma \ref{initial2} $(b)$.
\end{rem}

\begin{rem}
In Step 2, for any feasible $k$, one of the necessary conditions for updating the value of $\theta_k$ is
$$
\|\nabla g_k (\tau)\|_2 \geq \displaystyle\frac{2N \rho^{k/3}}{1-b}.
$$
This is a key condition imposed to make sure that $\|\nabla g_k (\tau)\|$ is not too small and thereby the algorithm
will not prematurely converge to a non-stationary point.
\cob
\end{rem}

\subsection{Convergence Analysis}

To conduct the convergence analysis of Algorithm~\ref{algo3},
we need to reformulate the algorithm via possible relabelling
of the functions $\{g_k\}_{k=0}^\infty$ and iterates $\{\theta_k\}_{k=0}^\infty$ similarly as in Section~\ref{convergence}.
For ease of presentation only, we assume in the reminder of this section that such a relabelling is not needed
and thereby $k$ actually records the number of times that Step $2$ has been executed.

The following theorem asserts the convergence of Algorithm~\ref{algo3} under some regularity conditions.
\begin{thm} \label{exponential-convergence-3}
Under the same assumptions as in Lemma \ref{initial2},
$$
\lim_{k\rightarrow \infty}g_k(\theta_k) \mbox{ exists and } \|\nabla g_k(\theta_k)\|_2\rightarrow 0,
$$
where $g_k(\theta_k)$ is defined in Algorithm \ref{algo3}.
\end{thm}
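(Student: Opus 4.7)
The plan is to mirror the four-part structure of the proof of Theorem~\ref{exponential-convergence-1}, adapted to the non-concave setting: (i) a uniform bound on the backtracking count $T(k)$, (ii) confinement of the iterates $\{\theta_k\}$ to $C_{k_0}$, (iii) convergence of $\{g_k(\theta_k)\}$, and (iv) the assertion $\|\nabla g_k(\theta_k)\|_2 \to 0$. Parts (i) and (ii) must be proved jointly by induction on $k$.

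For part (i), suppose inductively that $\theta_{k-1}\in C_{k_0}$ and that $\theta_{k-1}$ satisfies the gradient lower bound inherited from the previous iteration's check, so in particular $\|\nabla g_{k-1}(\theta_{k-1})\|_2$ is of order $\rho^{(k-1)/3}$ up to a fixed constant. For the trial step $\tau=\theta_{k-1}+t\nabla g_{k-1}(\theta_{k-1})$, the domain check $\tau\in\Theta^\circ$ holds once $t\le \mathrm{dist}(C_{k_0},\partial\Theta)/M$, using the uniform gradient bound $\|\nabla g_{k-1}(\theta_{k-1})\|_2\le M$ from (\ref{m-M}); the Armijo check is verified, as in classical analyses, by a second-order Taylor expansion based on (\ref{succeq_M}) with a correction for $\|\nabla g_k-\nabla g_{k-1}\|_2\le N\rho^{k+k_0}$ coming from (\ref{C-rho}); and for the gradient check, the triangle inequality combined with (\ref{succeq_M}) and (\ref{C-rho}) yields
\[
\|\nabla g_k(\tau)\|_2 \ge (1-Mt)\|\nabla g_{k-1}(\theta_{k-1})\|_2 - N\rho^{k+k_0}.
\]
Substituting the inductive gradient lower bound reduces the required inequality to a condition of the form $1-Mt\ge \rho^{1/3}+(\text{small multiple of }\rho^{2k_0/3})$, which is solvable for $t$ bounded below by a positive constant thanks precisely to Lemma~\ref{initial2}(a)'s condition $\rho^{1/3}+\rho^{2k_0/3}<1$. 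Hence some $t_0>0$ independent of $k$ makes all three checks pass, and $T(k)\le A:=\lceil \log_\beta(1/t_0)\rceil$ uniformly in $k$.

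For part (ii), the Armijo inequality gives $g_k(\theta_k)\ge g_k(\theta_{k-1})$ and therefore, by (\ref{C-rho}), $g_k(\theta_k)\ge g_{k-1}(\theta_{k-1})-N\rho^{k+k_0}$. Telescoping and then comparing $g_k$ with $g_0=f_{k_0}$ via (\ref{C-rho}) a second time, together with the bound $2N\rho^{k_0}/(1-\rho)\le \delta/8$ from Lemma~\ref{initial2}(a), yields $f_{k_0}(\theta_k)\ge y_0-\delta/8$, so $\theta_k\in C_{k_0}$ and the induction closes. Part (iii) then follows from the observation that $\{g_k(\theta_k)+\sum_{i=1}^{k}N\rho^{i+k_0}\}_{k}$ is nondecreasing (from the telescoping inequality above) and bounded above (by continuity of the $g_k$ on the compact $\Theta$ and the uniform bound transferred from (\ref{m-M})), hence convergent; subtracting the convergent tail $\sum_{i=1}^{\infty}N\rho^{i+k_0}$ recovers convergence of $g_k(\theta_k)$ itself. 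For part (iv), summing the sharper Armijo bound $g_k(\theta_k)-g_{k-1}(\theta_{k-1})\ge \alpha t_k\|\nabla g_{k-1}(\theta_{k-1})\|_2^2-N\rho^{k+k_0}$ over $k$ together with the convergence from part (iii) gives $\sum_k t_k\|\nabla g_{k-1}(\theta_{k-1})\|_2^2<\infty$, and since $t_k\ge \beta^A$ by part (i), this forces $\|\nabla g_k(\theta_k)\|_2\to 0$.

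The main obstacle is part (i), specifically the gradient check. One has to reconcile three separately decaying quantities: the algorithm's imposed gradient floor $\sim \rho^{k/3}$, the approximation error $\sim \rho^k$, and the Lipschitz perturbation $Mt\|\nabla g_{k-1}(\theta_{k-1})\|_2$ induced by taking the step. Lemma~\ref{initial2}(a) is calibrated exactly so that these are compatible with a step size bounded below by a positive constant; in particular, the slow $1/3$-power decay in the floor is precisely what leaves room for the Lipschitz term after one gradient step, and a naive attempt with a $\rho^{k}$ floor would fail. Parts (i) and (ii) must moreover be proved together, as each closes the inductive hypothesis needed by the other.
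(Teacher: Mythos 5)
Your proposal is correct and follows essentially the same route as the paper: the paper likewise bounds the backtracking count uniformly (splitting it into the domain, gradient-floor, and Armijo checks, with the gradient-floor induction resting on the inequality $\|\nabla g_k(\tau)\|_2\ge(1-Mt)\|\nabla g_{k-1}(\theta_{k-1})\|_2-N\rho^{k+k_0}$ and Lemma~\ref{initial2}(a)), confines the iterates to $C_{k_0}$ by telescoping the perturbed Armijo inequality, and then deduces convergence of $g_k(\theta_k)$ and summability of $\beta^{B+1}\|\nabla g_{k-1}(\theta_{k-1})\|_2^2$, forcing the gradients to vanish. Your minor variations (proving convergence via the nondecreasing sequence $g_k(\theta_k)+\sum_{i\le k}N\rho^{i+k_0}$ rather than eventual monotonicity) are equivalent in substance.
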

\medskip

\begin{proof}
Similarly as in Section \ref{convergence}, define
{\small\begin{align*}
T_1(k)&\triangleq \inf\{p\in \mathbb{Z}: \theta_{k-1}+\beta^p \nabla g_{k-1} (\theta_{k-1})\in \Theta^\circ\},\\
\hat{T}(k)&\triangleq \inf\left\{q\in \mathbb{Z}: q\geq T_1(k),\mbox{ } ||\nabla g_k(\theta_{k-1}+\beta^q\nabla g_{k-1}(\theta_{k-1}))||_2 \geq \frac{2N \rho^{(k+k_0)/3}}{1-b}\right\},\\
T(k)&\triangleq \inf\{r\in \mathbb{Z}: r\geq \hat{T}(k),\mbox{ } g_k(\theta_{k-1}+\beta^r \nabla g_{k-1}(\theta_{k-1}))\geq g_k(\theta_{k-1})+\alpha\beta^r ||\nabla g_{k-1} (\theta_{k-1})||_2^2\},
\end{align*}}
and
$$
T_2(k):=\hat{T}(k)-T_1(k), \quad T_3(k):= T(k)-\hat{T}(k).
$$In other words, for each $k$, $T_1(k)$ can be regarded as
the number of times
that Step 2 of Algorithm \ref{algo3} has been executed before the condition $\tau\in \Theta^\circ$ is met; $T_2(k)$ can be regarded as
the number of additional times
\cob
that Step 2 of Algorithm \ref{algo3} has been executed before the condition
$$ ||\nabla g_k(\theta_{k-1}+\beta^q\nabla g_{k-1}(\theta_{k-1}))||_2 \geq \frac{2N \rho^{(k+k_0)/3}}{1-b}$$
is also met
\cob
while $T_3(k)$ can be regarded as
the number of additional times
\cob
that Step 2 Algorithm \ref{algo3} has been executed before the Armijo condition
$$
g_k(\theta_{k-1}+\beta^r \nabla g_{k-1}(\theta_{k-1}))\geq g_k(\theta_{k-1})+\alpha\beta^r ||\nabla g_{k-1} (\theta_{k-1})||_2^2
$$
is also met.
\cob
The well-definedness of $\hat{T}(k)$ is based on the fact that if {\small$\theta_{k-1}+\beta^p \nabla g_{k-1} (\theta_{k-1})\in \Theta^\circ$} for some non-negative integer $p$, then the same inequality also holds for any integer $p'>p$; and the well-definedness of ${T}(k)$ will be postponed to Step 2 of the proof detailed below.

The remainder of the proof consists of $5$ steps, with the first three ones devoted to establish the uniform boundedness of 
$T_1(k)$, $T_2(k)$ and $T_3(k)$ and thus that of $T(k)$.
\cob



\textbf{Step 1: Uniform boundedness of $T_2(k)$.}  As in the proof of Theorem \ref{exponential-convergence-1}, it can be readily verified that $T_1(k)< \infty$ for all $k\geq 0$. Hence, when considering $T_2(k)$, we assume that $\tau=\theta_{k-1}+\beta^q \nabla g_{k-1}(\theta_{k-1})$ is already in $\Theta^\circ$.

In order to prove the uniform boundedness of $T_2(k)$, we proceed by way of induction. First of all, by the definition of $g_k$ and the choice of $\theta_0$, we have {\small$
||\nabla g_0(\theta_0)||_2\geq 2N \rho^{k_0/{3}}/({1-b}).$} Now, 
assuming that for some $k=1, 2, \dots,$
\cob
\begin{equation}\label{grad for k-1}
||\nabla g_{k-1} (\theta_{k-1})||_2\geq \frac{2N \rho^{(k_0+k-1)/{3}}}{1-b},
\end{equation}
we will derive a sufficient condition
\cob
on $\beta^q$ such that $
||\nabla g_{k}(\tau)||_2\geq 2N \rho^{(k_0+k)/{3}}/({1-b})$, where we recall that $\tau$ is defined as
\begin{equation} \label{tau-k}
\tau=\theta_{k-1}+\beta^q \nabla g_{k-1}(\theta_{k-1}).
\end{equation}
To this end, we first note that by the Taylor series expansion, there exist $\xi$ and $\hat{\xi}$ in $\Theta^\circ$ such that
$$
g_k(\tau)-g_k(\theta_{k-1})=\nabla g_k (\tau)^T (\tau-\theta_{k-1})-(\theta_{k-1}-\tau)^T \frac{\nabla^2 g_k(\xi)}{2} (\theta_{k-1}-\tau)
$$
and
$$g_k(\tau)-g_k(\theta_{k-1})=\nabla g_k (\theta_{k-1})^T (\tau-\theta_{k-1})+(\tau-\theta_{k-1})^T \frac{\nabla^2 g_k(\hat{\xi})}{2} (\tau-\theta_{k-1}),
$$
which immediately imply that
\begin{align}
& \nabla g_k (\tau)^T (\tau-\theta_{k-1})-(\theta_{k-1}-\tau)^T \frac{\nabla^2 g_k(\xi)}{2} (\theta_{k-1}-\tau) \notag \\
&= \nabla g_k (\theta_{k-1})^T (\tau-\theta_{k-1})+(\tau-\theta_{k-1})^T \frac{\nabla^2 g_k(\hat{\xi})}{2} (\tau-\theta_{k-1}) \label{tay}.
\end{align}
Noting that $||\nabla^2 g_k(\xi)||_2\leq M$ for all $\xi\in \Theta^\circ$ and
\begin{equation}  \label{g-minus}
||\nabla g_k(\theta)-\nabla g_{k-1}(\theta)||_2=||\nabla f_{k+k_0}(\theta)-\nabla f_{k+k_0-1}(\theta)||_2\leq N \rho^{k+k_0}
\end{equation}
for all $\theta \in \Theta^\circ$, we deduce from (\ref{tay}) that
{\small\begin{align}
 ||\nabla g_k(\tau)||_2 \cdot ||\tau-\theta_{k-1}||_2 &\geq \nabla g_k(\theta_{k-1})^T (\tau-\theta_{k-1}) -M ||\tau-\theta_{k-1}||_2^2 \notag \\
& \geq \nabla g_{k-1} (\theta_{k-1})^T (\tau-\theta_{k-1}) -N\rho^{k+k_0} ||\tau-\theta_{k-1}||_2 -M ||\tau-\theta_{k-1}||_2^2 \label{cond 2}.
\end{align}}
\!\!Clearly, it follows from (\ref{tau-k}) that
the vectors
\cob
$\nabla g_{k-1} (\theta_{k-1})$ and $\tau-\theta_{k-1}$ have the same direction,
which means that (\ref{cond 2}) can be rewritten as
{\small\begin{align*}
||\nabla g_k(\tau)||_2 \cdot ||\tau-\theta_{k-1}||_2
\geq  ||\nabla g_{k-1}(\theta_{k-1})||_2 \cdot ||\tau-\theta_{k-1}||_2 -N\rho^{k+k_0} ||\tau-\theta_{k-1}||_2 -M ||\tau-\theta_{k-1}||_2^2.
\end{align*}}
\!\!Simplifying this inequality, we have
\begin{align} \label{grad relation}
||\nabla g_k(\tau)||_2&\geq  (1-M\beta^q) ||\nabla g_{k-1}(\theta_{k-1})||_2-N\rho^{k+k_0} \notag \\
&\geq (1-M\beta^q) \frac{2N \rho^{(k_0+k-1)/{3}}}{1-b}-N\rho^{k+k_0}.
\end{align}
Now, using the fact $1-\rho^{1/3}-\rho^{2k_0/3}>0$ (see Lemma \ref{initial2} $(a)$), (\ref{grad for k-1}) and (\ref{grad relation}), we conclude that the condition
\begin{equation} \label{beta q}
\beta^q \leq \frac{1-\rho^{1/3}-\rho^{2k_0/3}}{M}
\end{equation}
is sufficient for $
||\nabla g_{k}(\tau)||_2\geq 2N \rho^{(k+k_0)/3}/(1-b).$
In other words, the induction argument successfully proceeds as long as (\ref{beta q}) holds, and therefore $T_2(k)$ can be uniformly bounded as below: for all feasible $k$,
\begin{equation} \label{T_2}
T_2(k) \leq \max \left\{0, \frac{\log\left( \left(1-\rho^{1/3}-\rho^{2k_0/3}\right)\Big/M \right)}{\log \beta}  + 1\right\}.
\end{equation}
\cob

\textbf{Step 2: Uniform boundedness of $T_3(k)$.}
First note that from (\ref{beta q}),
\cob
if the inequality
$$
||\nabla g_k(\theta_{k-1}+\beta^q\nabla g_{k-1}(\theta_{k-1}))||_2 \geq \frac{2N \rho^{(k+k_0)/3}}{1-b}
$$
holds for some non-negative integer $q$, then it remains true for all integers $q'>q$. This observation justifies the well-definedness of $T_3(k)$. Moreover, due to the boundedness of $T_2(k)$ for each $k$
(in fact, it is uniformly bounded),
\cob
we can assume without loss of generality that $
||\nabla g_k(\tau)||_2 \geq 2N \rho^{(k+k_0)/3}/(1-b)$
is already satisfied before we proceed to establish the uniform boundedness of $T_3(k)$.

By the Taylor series expansion formula and (\ref{g-minus}), we have
\begin{align}
g_k(\tau) &\geq g_k(\theta_{k-1})+ \beta^r \nabla g_k (\theta_{k-1})^T\nabla g_{k-1}(\theta_{k-1})-\frac{M \beta^{2r} }{2} ||\nabla g_{k-1} (\theta_{k-1})||_2^2 \notag \\
&\geq g_k(\theta_{k-1})+\beta^r ||\nabla g_{k-1} (\theta_{k-1})||_2^2-\frac{M \beta^{2r} }{2} ||\nabla g_{k-1} (\theta_{k-1})||_2^2 -N \rho^{k+k_0} \beta^r ||\nabla g_{k-1} (\theta_{k-1})||_2,
\notag
\end{align}
\cob
where $\tau=\theta_{k-1}+\beta^r \nabla g_{k-1} (\theta_{k-1}).$ It then follows that the condition
\begin{equation}
\beta^r \leq \frac{1}{M}-\displaystyle\frac{2N \rho^{k+k_0-1}}{M||\nabla g_{k-1} (\theta_{k-1})||_2} \notag
\end{equation}
is sufficient to ensure
\begin{equation} \label{increasing cdt 2}
g_k(\tau)\geq g_k(\theta_{k-1})+\alpha \beta^r ||\nabla g_{k-1} (\theta_{k-1})||_2^2.
\end{equation}
 Recalling that
$$
||\nabla g_{k-1}(\theta_{k-1})||_2 \geq \frac{2N \rho^{(k+k_0-1)/3}}{1-b} \geq \frac{2N \rho^{k+k_0-1}}{1-b}
$$
\cob
with $0<b<1$, we deduce that the condition $\beta^r \leq b/M$ is sufficient for (\ref{increasing cdt 2}). In other words, we have
\begin{equation} \label{T_3}
T_3(k) \leq \max\left\{0, \frac{\log b-\log M}{\log \beta}+1 \right\}.
\end{equation}

\textbf{Step 3: Uniform boundedness of $T_1(k)$ and $T(k)$.}
In this step, we will show that $T_1(k)$ is uniformly bounded over all $k$. This, together with the established fact that $T_2(k)$ and $T_3(k)$ are both uniformly bounded, immediately implies the uniform boundedness of $T(k)$ over all $k$.

From Algorithm \ref{algo3}, $
g_k(\theta_k) \geq  g_k(\theta_{k-1}) + \alpha t ||\nabla g_{k-1}(\theta_{k-1})||_2^2$
for all $k\geq 0$, where
$\theta_k = \theta_{k-1} + t \nabla g_{k-1}(\theta_{k-1})$.
\cob
Using (\ref{C-rho}), we have
\begin{align}
g_{0} (\theta_k)&\geq g_k(\theta_{k})-\frac{N\rho^{k_0+1}}{1-\rho} \notag\\
&\geq g_{k} (\theta_{k-1})-\frac{N\rho^{k_0+1}}{1-\rho}  \notag \\
&\geq g_{k-1}(\theta_{k-1})-N\rho^{k+k_0} -\frac{N\rho^{k_0+1}}{1-\rho},\notag
\end{align}
\cob
from which, by induction on $k$,
we arrive at
\cob
\begin{align} \label{inc}
g_0(\theta_k)&\geq g_{0} (\theta_{0})-\sum_{k=1}^\infty N \rho^{k+k_0}-\frac{N\rho^{k_0+1}}{1-\rho} \notag \\
&\geq g_0 (\theta_0)- \frac{2 N \rho^{k_0}}{1-\rho},
\end{align}
\cob
for all $k\geq 0$. Recalling from Lemma \ref{initial2} and Step 0 of Algorithm \ref{algo3} that $$
\theta_0\in B_{k_0} = \{x\in \Theta: f_{k_0}(x) \geq y_0\} =\{x\in\Theta: g_{0}(x) \geq y_0\},$$
we deduce from (\ref{inc}) and Lemma \ref{initial2} that for all $k\geq 0$,
$$\theta_k \in \left\{x: g_0(x)\geq y_0-\frac{2 N \rho^{k_0}}{1-\rho} \right\}\subseteq C_{k_0} \subseteq \Theta^\circ \quad \mbox{and} \quad dist(C_{k_0}, \partial\Theta)>0,$$
where $C_{k_0}$ is defined in Lemma \ref{initial2} $(b)$.
Hence, for any non-negative integer $p$ such that
$p\geq \displaystyle \log (dist(C_{k_0}, \partial\Theta)/M) /\log \beta$,
\cob
we have $\theta_{k-1}+ \beta^p \nabla g_k(\theta_{k-1}) \in \Theta^\circ,$
establishing the following uniform bound
\begin{equation} \label{T_11}
T_1(k)\leq \max\left\{0, \frac{\log (dist(C_{k_0}, \Theta^c)/M)}{\log \beta}+1 \right\}.
\end{equation}
\cob

Finally, it is clear
from (\ref{T_2}), (\ref{T_3}), (\ref{T_11})
\cob
and the definition of $T(k)$ that there exists a non-negative integer $B$
such that, for all $k$,
\cob
\begin{equation} \label{T_k}
T(k)\leq B.
\end{equation}

\textbf{Step 4: Convergence of $g_k(\theta_k)$.}

It follows from (\ref{C-rho}), (\ref{T_k}) and the fact {\small$||\nabla g_{k-1} (\theta_{k-1})||_2 \geq 2N \rho^{(k+k_0-1)/3}/(1-b)$} that
\cob
\begin{align*}
g_k(\theta_k)&\geq g_k(\theta_{k-1})+\alpha \beta^{B+1} ||\nabla g_{k-1}(\theta_{k-1})||_2^2 \notag \\
&\geq g_{k-1}(\theta_{k-1})+\alpha \beta^{B+1} ||\nabla g_{k-1}(\theta_{k-1})||_2^2 - N \rho^{k+k_0} \notag \\
&\geq g_{k-1}(\theta_{k-1}) +\frac{4 \alpha \beta^{B+1} N^2 \rho^{2(k+k_0-1)/3}}{(1-b)^2}-N\rho^{k+k_0}.
\end{align*}
Observing that if $k$ is large enough,
$$
\frac{4 \alpha \beta^{B+1} N^2 \rho^{2(k+k_0-1)/3}}{(1-b)^2}\geq N\rho^{k+k_0},
$$
we deduce that $g_{k}(\theta_k)\geq g_{k-1} (\theta_{k-1})$ for sufficiently large $k$. Noting that (\ref{C-rho}) and the definition of $g_k$
imply that there exists $C > 0$ such that
$g_k(\theta_k)\leq C$ for all $k$, we conclude that $\lim_{k \to \infty} g_k(\theta_k)$ exists.

\textbf{Step 5: {\boldmath$\|\nabla g_k(\theta_{k})\|_2\rightarrow 0$.\unboldmath}} 

Since $
g_k(\theta_k) \geq g_{k-1}(\theta_{k-1}) +\alpha \beta^{B+1} ||\nabla g_{k-1} (\theta_{k-1})||_2^2-N\rho^{k+k_0},$ we have
$$
\sum_{k=1}^{n-1} \alpha \beta^{B+1} ||\nabla g_{k-1} (\theta_{k-1})||_2^2\leq g_n(\theta_n)-g_0(\theta_0)+\sum_{k=1}^{n-1} N\rho^{k+k_0},
$$
which, together with the uniform boundedness of $\{g_k(\theta_k)\}_{k=0}^\infty$, yields
$$
\sum_{k=1}^\infty \alpha \beta^{B+1} ||\nabla g_{k-1} (\theta_{k-1})||_2^2 < \infty.
$$
Hence, $
\lim_{n\rightarrow \infty} ||\nabla g_{k-1}(\theta_{k-1})||_2 =0$. The proof of the theorem is then complete.
\end{proof}
\medskip

\subsection{A noisy channel with two states: Gilbert-Elliott Channel} \label{gil-ell}

In this section, we consider a Gilbert-Elliott channel with a first-order Markovian input under the $(1,\infty)$-RLL constraint. To be more specific, let $\oplus$ denote binary addition and $\{S_n\}_{n=0}^\infty$ be the state process which is a binary stationary Markov chain with the transition probability matrix
$$
\left[\begin{array}{cc}
0.7 & 0.3 \\
0.3 & 0.7
\end{array}\right].
$$
We focus on the Gilbert-Elliott channel characterized by the input-output equation
\begin{equation} \label{G-E}
Y_n = X_n \oplus E_n,
\end{equation}
where $\{X_n\}_{n=1}^\infty$ is a binary first-order stationary Markov chain independent of $\{S_n\}_{n=1}^\infty$ with the transition probability matrix
$$
\left[\begin{array}{cc}
1-\theta& \theta\\
1&0
\end{array}\right],
$$
and $\{E_n\}_{n=1}^\infty$ is the noise process given by
$$
E_n=
\begin{cases}
0, \quad \mbox{with probability } 0.99, \\
1, \quad \mbox{with probability } 0.01,
\end{cases}
$$
when $S_{n-1}=0$ and
$$
E_n=
\begin{cases}
0, \quad \mbox{with probability } 0.9, \\
1, \quad \mbox{with probability } 0.1,
\end{cases}
$$
when $S_{n-1}=1$.
In other words, at time $n$, if the channel state takes the value $0$, the channel is a binary symmetric channel (BSC) with crossover probability $0.01$, and if the channel state takes the value $1$, it is a BSC with crossover probability $0.1$. 
\cob
It is worth noting that $E_n$ and $E_{n-1}$ are not statistically independent for this channel.

It can be readily checked that the aforementioned channel is a finite-state channel characterized by
$$
p(y_n, s_n|x_n,s_{n-1})=p(y_n|x_n,s_{n-1})p(s_n|s_{n-1})
$$
and the mutual information rate can be computed as
$$
I(X(\theta);Y(\theta))=\lim_{k \rightarrow \infty} H(Y_k(\theta)|Y_1^{k-1}(\theta))-H(E_k(\theta)|E_1^{k-1}(\theta)).
$$
The concavity of $I(X(\theta);Y(\theta))$ with respect to $\theta$ is not known, yet it seems that Algorithm~\ref{algo3} can be applied to effectively maximize it. More specifically, setting
$$
f_k(\theta)=H(Y_k(\theta)|Y_1^{k-1}(\theta))-H(E_k(\theta)|E_1^{k-1}(\theta)),
$$
we have applied Algorithm~\ref{algo3} with the initial point $\theta_0=0.2$ and we have obtained the following simulation results, from which one can observe fast convergence of the algorithm:
\begin{center}
\begin{tabular}{c|c|c|c}
$k$ & $\theta_{k}$ & $\nabla f_k(\theta_k)$ & $f_k(\theta_k)$ \\
7 & 0.28824 & 0.360645 & 0.327527 \\
8 & 0.378401 & 0.104901 & 0.347958 \\
9 & 0.404626 & 0.0427187 & 0.349884 \\
10 & 0.415306 & 0.0186297 & 0.350211 \\
11 & 0.417635 & 0.0134652 & 0.350248 \\
12 & 0.421001 & 0.00605356 & 0.350281 \\
13 & 0.422514 & 0.00274205 & 0.350288 \\
14 & 0.4232 & 0.0012462 & 0.350289 \\
15 & 0.423511 & 0.000567221 & 0.350289 \\
16 & 0.423653 & 0.000258353 & 0.350289
\end{tabular}
\end{center}
\begin{figure}[H]
\begin{center}
\includegraphics[width=7cm]{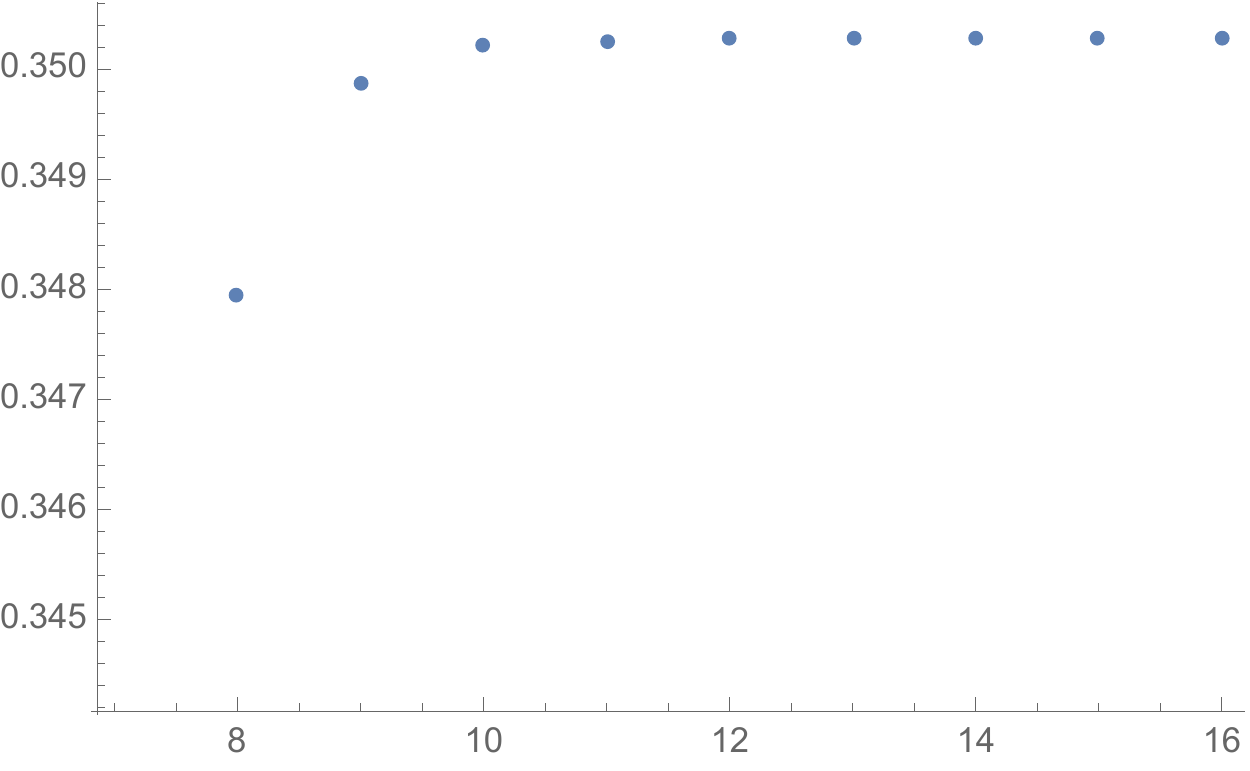}
\caption{$f_k(\theta_k)$}
\label{BECsimu}
\end{center}
\end{figure}

\bigskip \bigskip


\end{document}